\documentclass{ieeeconf}
\IEEEoverridecommandlockouts
\usepackage{verbatim}
\usepackage{epsfig}
\usepackage{amsmath}
\usepackage{lipsum}
\usepackage{amsthm}
\usepackage{amssymb}
\usepackage{psfrag}
\usepackage{soul}
\usepackage[T1]{fontenc}
\usepackage[ruled]{algorithm2e}
\usepackage{mathtools, nccmath}
\usepackage{cuted}
\usepackage{multirow}
\usepackage[table,xcdraw]{xcolor}
\usepackage{etoolbox}
\usepackage{textcomp}
\usepackage{xcolor}
\usepackage{url}
\usepackage{dsfont}
\usepackage{mathtools}
\usepackage{bbm}

\usepackage[usenames,dviIBRames]{pstricks}
\usepackage{etoolbox}
\usepackage{graphicx}

\usepackage{stfloats}
\makeatletter
\patchcmd{\@makecaption}
  {\scshape}
  {}
  {}
  {}
\makeatother
\usepackage[noadjust]{cite}

\usepackage[font=footnotesize,skip=0pt]{caption}

\newtheorem{theorem}{Theorem}[]
\newtheorem{lemma}{Lemma}[]

\newcommand{\admm}{\textbf{\texttt{DC-DistADMM}}}
\newcommand{\ii}{i}
\newcommand{\jj}{j}
\newcommand{\s}{s}
\newcommand{\kk}{k}

\newcommand{\sst}{\mathrm{ss}}
\newcommand{\Vnom}{\overline{V}}
\newcommand{\Qnom}{\widehat{Q}}
\newcommand{\unom}{\overline{u}}

\newcolumntype{L}[1]{>{\raggedright\arraybackslash}p{#1}}

\DeclareMathOperator*{\argmin}{argmin}
\DeclareMathOperator*{\minimize}{minimize}

\definecolor{color1}{rgb}{0,0,0}

\begin{document}
\title{\textbf{Large-Signal Stability Analysis of Optimization-Based Secondary Control for Distributed Energy Resources}}

\author{Vivek Khatana$^{\dagger}$,
        Soham Chakraborty$^{\ddagger}$,
        Murti V. Salapaka$^{\wr}$
\thanks{$^{\dagger}$Department of Mechanical Science and Engineering, University of Illinois at Urbana-Champaign, IL, USA {\tt\scriptsize \{vkhatana\}@\{illinois.edu\}}, $^{\ddagger}$Department of Electrical Engineering, Indian Institute of Science, Karnataka, India {\tt\scriptsize \{schakraborty@iisc.ac.in\}}, $^{\wr}$Department of Electrical and Computer Engineering, University of Minnesota, MN, USA {\tt\scriptsize \{murtis@umn.edu\}}. The research conducted with the support of the United States Department of Energy via grant~DE-CR$0000040$.
}
}
\maketitle

\begin{abstract}
This article develops a large-signal stability analysis for a sampled-data optimization-based secondary controller for distributed energy resources (DERs) in power systems. The induced closed loop combines nonlinear inverter power-flow dynamics, filtered active and reactive power measurements, constrained optimization updates, and interpolation-based actuation between sampling instants. We study this optimization-in-the-loop nonlinear sampled-data system beyond local linearization. The analysis provides computable bounds on the voltage, filtered reactive power, and the secondary control input. We further characterize steady-state operating points and establish how the optimizer objectives and constraints connect voltage regulation with equal per-unitized reactive power sharing. Finally, input-to-state stability of the frequency dynamics is established with respect to DER voltages and control inputs. These results provide a rigorous mathematical foundation for sampled-data optimization-based secondary control of DERs.
\end{abstract}

\vspace{-0.2in}
\section{Introduction}
Distributed energy resources (DERs) operate in different functional modes depending on their interface and control objectives. Grid-following (GFL) DERs synchronize with the grid and inject prescribed power, whereas grid-forming (GFM) DERs establish and regulate voltage and frequency, particularly in islanded or weak-grid conditions~\cite{minai2024evolution}. Primary droop control enables decentralized operation of DERs~\cite{chandorkar1993control}; however, by itself, it generally does not restore frequency to its nominal value, does not guarantee accurate reactive power sharing under heterogeneous impedances, and does not explicitly enforce device-level operational constraints~\cite{zhong2011robust, savaghebi2012secondary,khatana_IECON_distcontrol}. These limitations motivate secondary control mechanisms~\cite{khatana_IECON_distcontrol,simpson2013synchronization,simpson2015secondary} that compute voltage adjustments for GFM-DERs and active-power references for GFL-DERs through constrained optimization problems.

This paper studies the closed-loop dynamics induced by a sampled-data optimization-based secondary controller for DER networks. The controller is designed to coordinate voltage regulation, reactive power sharing, and frequency support while respecting local operational constraints. Its broader architecture, distributed implementation, design rationale, plug-and-play operation, privacy-preserving information exchange, and performance validation are presented in a separate companion article. In contrast, the present paper is devoted to the mathematical analysis of the resulting closed-loop dynamics induced by the distributed secondary controller. The controller model and equations needed for the analysis are specified explicitly in Section~\ref{sec:control_setup}; consequently, the results developed here are self-contained and do not rely on external implementation details. The closed-loop system is mathematically challenging because the physical inverter dynamics evolve continuously, while the secondary controller is updated only at sampling instants. At each sampling time, constrained optimization problems are instantiated using measured voltage, reactive-power, and frequency signals. The computed optimizer outputs are then applied through interpolation over the next sampling interval. Hence, the resulting dynamics are not those of a standard continuous-time droop-controlled system, but of a nonlinear sampled-data system with optimization in the loop. Existing analyses of secondary control in power systems primarily focus on small-signal, local behavior and continuous-time models~\cite{simpson2013synchronization, simpson2015secondary}. As a result, they do not directly provide large-signal guarantees for the closed-loop voltage and power dynamics and are not applicable to the setting considered here.

We first formulate the closed-loop sampled-data model induced by the optimization-based secondary controller, capturing nonlinear power-flow coupling, filtered power measurements, constrained optimization updates, and interpolation-based actuation. We then provide a self-contained stability analysis of the resulting closed-loop model. The main results establish that i)~the DER voltage and filtered reactive-power dynamics remain within an explicit forward-invariant set, ii)~a positive steady state exists in this set, iii)~the steady state achieves voltage regulation and equal per-unitized reactive power sharing, and iv)~the frequency dynamics are input-to-state practically stable with respect to secondary control signals and DER voltages. Together, these results provide rigorous closed-loop guarantees for sampled-data optimization-based secondary control beyond local small-signal analysis. 

\vspace{-0.1in}
\section{Control Setup}\label{sec:control_setup}
We consider a network of DERs operating in either GFM or GFL mode. Both GFM- and GFL-DERs operate within a hierarchical control architecture~\cite{Hierarchical_Control} comprising three layers: zero-level, primary-level, and secondary-level control. The GFM-DERs provide voltage and frequency regulation through primary-level droop control, while the GFL-DERs inject active power through primary-level $\mathrm{PQ}$ dispatch~\cite{Hierarchical_Control}.

Although primary droop enables decentralized operation, it generally does not achieve the desired steady-state objectives~\cite{khatana_IECON_distcontrol}. In particular, $P$-$\omega$ droop induces a steady-state frequency deviation from the nominal value, while unequal line impedances prevent accurate reactive power sharing under $Q$-$V$ droop and cause GFM-DER terminal voltages to deviate from nominal values. Thus, a secondary-level controller is needed to restore frequency, regulate voltages, and enforce reactive power sharing while respecting device-level constraints~\cite{khatana_IECON_distcontrol}.

The secondary controller considered in this paper modifies the GFM-DER voltage references and the GFL-DER active-power references using sampled measurements and constrained distributed optimization. Next, we discuss the model of this secondary controller. The subsequent sections analyze the resulting nonlinear sampled-data closed-loop dynamics. 
References for $P$-$\omega$/$Q$-$V$ droop control law~\cite{mu_syn} for GFM-DER $\ii$ are given by
\begin{align}
    \omega_i &= \textstyle \overline{\omega} - r_{\omega_i} \overline{P}_i, \ \ \overline{P}_i =  \frac{1}{(\tau_{P_i} s+1)}P_i, \label{eq:w_droop}\\
     V_i &= \textstyle \Vnom - r_{V_i}\overline{Q}_i + U_i^\star,\ \ \overline{Q}_i =  \frac{1}{(\tau_{Q_i} s+1)}Q_i. \label{eq:v_droop_0}
\end{align}
Here, $\overline{\omega}$, $\omega_i, \Vnom, V_i$ are the nominal and measured frequency and voltages, respectively, and $r_{\omega_i}, r_{V_i} \in \mathbb{R}_{>0}$ are the droop coefficients. The instantaneous active and reactive power $P_i$ and $Q_i$, respectively, are given by 
\begin{align}
    \hspace{-0.1in} P_i &= \textstyle G_{ii} V_i^2 - \sum_{\kk \in N_i} B_{ik} V_i V_{k} (\delta_i - \delta_{k}) - P_i^\mathrm{gfl} \label{eq:activepower},\\
    \hspace{-0.12in}  Q_i & = \textstyle -B_{ii} V_i^2 +  \sum_{\kk \in N_i} B_{ik} V_i V_{k},\label{eq:reactivepower}
\end{align}
where, the complex admittance, $Y_{ik} = Y_{ki} \in \mathbb{C}$, between buses $i \in \{1,2,\dots,n\}$ and $\kk \in \{1,2,\dots,n\}$ are represented as $Y_{ik} := -j B_{i \kk} = -j B_{\kk i} =: Y_{ki}$ with $B_{i \kk} = B_{\kk i} < 0$. The set of neighbors of a bus $i \in \{1,2,\dots,n\}$ is defined as $N_i:= \{\kk | \kk \in \{1,2,\dots,n\}, \kk \neq i, Y_{i\kk} \neq 0 \}$ and $Y_{ii} = G_{ii} + j (B^{\mathrm{sh}}_{i} + \sum_{\kk \in N_i} B_{ik}) := G_{i i} + j B_{ii}$ where, $G_{i i} > 0$ is the shunt conductance, $B^{\mathrm{sh}}_{i} < 0$ is the shunt susceptance and $B_{i i} < 0$. The quantities $\overline{P}_i$ and $\overline{Q}_i$ in~\eqref{eq:w_droop}-\eqref{eq:v_droop_0} are the averaged active and reactive power injections of GFM-DER $i$, obtained by passing the instantaneous active and reactive power $P_i$ and $Q_i$ through low-pass filters with time constants, $\tau_{P_i}, \tau_{Q_i} \in \mathbb{R}_{>0}$ respectively. 

Signals, $U_i^\star(t), i \in \{1,2,\dots, n\}$ in~\eqref{eq:v_droop_0} are determined using secondary-level controller via the following design law, 
\begin{align}\label{eq:U_i_design_law}
   \hspace{-0.08in} U_i^\star(t) := u_i(t) - \beta_{Q_i} r_{V_i} \overline{Q}_i(t) + \beta_{V_i} (\Vnom - V_i(t)),
\end{align} 
where $\beta_{V_i}, \beta_{Q_i} \in \mathbb{R}$ are design hyper-parameters. With the compensation signals $U_i^\star$ in~\eqref{eq:U_i_design_law} the reference in~\eqref{eq:v_droop_0} becomes
\begin{align}\label{eq:voltage_reference}
    V_i &=  \textstyle \Vnom -  r_{V_i}\frac{(1 + \beta_{Q_i})}{(1 + \beta_{V_i})} \overline{Q}_i + \frac{1}{(1 + \beta_{V_i})}u_i,
\end{align}
where, $u_i(t), t\in(t_\s,t_{\s+1}]$, is updated from the last value $u_i(t_\s)$ via the first-order hold like interpolation as follows\footnote{Sampling instants satisfy $t_{\s+1}:=t_\s+\Delta_\s+\Delta$ for $\s\in\{0,1,2,\dots\}$}
\begin{equation*}
     u_i(t) = \hspace{-0.035in} \Bigg\{  \hspace{-0.02in} \begin{aligned} 
     & u_i(t_\s), t \in (t_\s, t_\s + \Delta_\s),\\ 
     & \textstyle u_i(t_\s) + \frac{(t - t_\s - \Delta_\s)(x_{\s,i} - u_{i}(t_\s))}{\Delta}, t \in (t_\s + \Delta_\s, t_{\s+1}]
    \end{aligned} 
\end{equation*}
for all $i\in\{1,\dots,n\}$.  Here $x_{\s,i}$ denotes the solution of the following problem instantiated with measurements $[V_i(t_\s), \ \overline{Q}_i(t_\s)] \in \mathbb{R}^2, i \in \{1,\dots, n\}$:
\begin{equation}\label{eq:gfm_optimization}
\begin{aligned}
    & \hspace{-0.65in} x_\s := \textstyle \argmin_{x_{\s,1}, \dots, x_{\s,n}}  \sum_{i=1}^n \ \frac{1}{2}(x_{\s,i} - \alpha_i (t_\s))^2 \\
    \mbox{subject to} \ x_{\s,i} &= x_{\s,j}, \ \forall i,\jj \in \{1,\dots,n\}, \\
     x_{\s,i} &\in D_i^{V_\Delta}(t_\s),\ \forall i \in \{1,\dots,n\} \\
     \alpha_i (t_\s) &:= (1+\beta_{Q_i})(\Vnom - V_i(t_\s)) \\ 
    & \hspace{0.15in} - \beta_{V_i} r_{V_i} \overline{Q}_i (t_\s) \ \forall i \in \{1,\dots,n\}.
\end{aligned}    
\end{equation}
Here, $D_i^{V_\Delta}(t_\s) := \big\{ \zeta \big | | \zeta - (1+\beta_{Q_i}) r_{V_i} \overline{Q}_i (t_\s) + \beta_{V_i} (\Vnom - V_i(t_\s)) | \leq V_\Delta \big\}$ with a prescribed parameter $V_\Delta \in \mathbb{R}_{>0}$. 

In addition,  the secondary-level controller designs active power references $P_i^\star$, $i \in \{1,\dots,n\}$, for the primary-level $\mathrm{PQ}$-dispatch controllers of the GFL-DERs as
\begin{align}\label{eq:Pi_star_design_law}
P_i^\star(t) := P_i^{\min} + p_i(t)(P_i^{\max} - P_i^{\min}).
\end{align}
Here, $P_i^{\min}$ and $P_i^{\max}$ denote the minimum and maximum active power generation capacities of GFL-DER $i$, respectively, and $p_i(t),t\in(t_\s,t_{\s+1}]$, are updated from $p_i(t_\s)$ as
\begin{equation*} 
    p_i(t) = \hspace{-0.035in} \Bigg\{  \hspace{-0.02in} \begin{aligned}
            & p_i(t_\s),  t \in (t_\s, t_\s + \Delta_\s),\\
          & \textstyle p_i(t_\s) + \frac{(t - t_\s - \Delta_\s)(y_{\s,i} - p_i(t_\s))}{\Delta}, t \in (t_\s + \Delta_\s, t_{\s+1}]
    \end{aligned}
\end{equation*}
for all $i\in\{1,\dots,n\}$, where $y_{\s,i}$ denotes the solution of
\begin{equation}\label{eq:gfl_optimization}
    \begin{aligned}
    y_\s := & \textstyle \argmin_{y_{\s,1},\dots, y_{\s,n}} \sum_{i=1}^{n} \ \frac{1}{2}(P_{y_{i}}^\star)^2 + b_i P_{y_{i}}^\star + c_i \\
   &\mbox{subject to} \  0 \leq y_{\s,i} \leq 1 \ \forall i \in \{1,\dots, n\}, \\
   & \hspace{0.6in} \textstyle \sum_{i=1}^{n} P_{y_{i}}^\star =  \rho_\mathrm{d}(t_\s) \\
   &\hspace{-0.35in} P_{y_{i}}^\star := P_i^{\min} + y_{\s,i} (P_i^{\max} - P_i^{\min}) \ \forall i \in \{1,\dots, n\},
\end{aligned}
\end{equation}
where $y_\s=[y_{\s,1},\dots,y_{\s,n}]\in\mathbb{R}^{n}$ denotes the solution of the optimization problem, and $P_i^{\min}$ and $P_i^{\max}$ are the active power limits of GFL-DER $i$, as in~\eqref{eq:Pi_star_design_law} and $\rho_\mathrm{d}(t_\s)$ is the total active power requirement. The GFL-DER power set-point estimate in~\eqref{eq:gfl_optimization} is parametrized as
$P_{y_i}^\star:=P_i^{\min}+y_{\s,i}(P_i^{\max}-P_i^{\min})$, with $0\leq y_{\s,i}\leq 1$, ensuring that the resulting active power set-points satisfy capacity constraints. The solution to~\eqref{eq:gfl_optimization}  determines the GFL-DER power reference $P_i^\star$ in~\eqref{eq:Pi_star_design_law}. The GFL-DER output $P_i^\mathrm{gfl}$ tracks $P_i^\star$ with negligible, assumed zero, error~\cite{yazdani} and therefore, 
\begin{align}\label{eq:gfl_power}
    \textstyle P_i^\mathrm{gfl} (t) = P_i^\star(t) = P_i^{\min} + p_i(t)(P_i^{\max} - P_i^{\min}) \ \forall t.
\end{align}
Collectively, problems~\eqref{eq:gfm_optimization} and~\eqref{eq:gfl_optimization} can be formulated as
\begin{equation}\label{eq:opt_prob_equivalent}
    \begin{aligned}
	& \textstyle \minimize_{\zeta_1, \zeta_2, \dots, \zeta_{n_\zeta}} \quad \sum_{i=1}^{n_\zeta} f_i(\zeta_i)\\
	& \mbox{subject to} \hspace{0.2in} \zeta_i = \zeta_\jj \ \forall i, \jj \in \{1,2,\dots,n_\zeta\}, \\
    & \hspace{0.15in} \zeta_i \in \mathcal{X}_i, \ \overline{H}_i \zeta_i  = \overline{h}_i, \ \underline{H}_i \zeta_i \leq \underline{h}_i  \  \forall i \in \{1,\dots,n_\zeta\}.
\end{aligned}
\end{equation}
Here, each DER maintains a local decision variable $\zeta_i$. The objective $f_i:\mathbb{R}^{n_\zeta}\rightarrow\mathbb{R}$, constraint set $\mathcal{X}_i\subseteq\mathbb{R}^{n_\zeta}$, equality constraints $\overline{H}_i\zeta_i=\overline{h}_i$, and inequality constraints $\underline{H}_i\zeta_i\leq\underline{h}_i$ are local to DER $i$ and can be enforced in a decentralized manner. The consensus constraints $\zeta_i=\zeta_\jj$ couple the local decisions and require network-level coordination. The developed secondary-level controller, therefore, solves~\eqref{eq:opt_prob_equivalent} using the distributed discrete-time $\admm$ algorithm developed in~\cite{ADMM_tac}. The $\admm$ algorithm has a geometric rate of convergence; after $\theta$ iterations,
\begin{align}\label{eq:admm_iter_comp}
\|\zeta_i^{(\theta)} - \zeta^\star \|^2 \leq \Upsilon (0.75)^{\theta}, \quad \mbox{for all} \ i,
\end{align}
where $\Upsilon$ is a known constant determined from problem data. Hence, an accuracy of $\epsilon$ requires only $\theta_\epsilon=O(\log(1/\epsilon))$ iterations, and therefore the sampling interval $\Delta_\s$ in the sampled-data secondary controller can be determined according to the chosen accuracy $\epsilon$. 
\vspace{-0.15in}
\section{Stability Analysis}\label{sec:stability_analysis}
Using~\eqref{eq:w_droop}-\eqref{eq:v_droop_0}, GFM-DER $i$ closed-loop dynamics are,
\begin{align}
    \dot{\delta}_i &= \omega_i, \quad \tau_{P_i}\dot{\omega_i} = -(\omega_i - \overline{\omega}) - r_{\omega_i}P_i, \label{eq:freq_dyan}\\
    \tau_{Q_i}\dot{V_i} &= -(V_i - \Vnom) - r_{V_i}Q_i + U_i^\star +  \tau_{Q_i}\dot{U}_i^\star. \label{eq:V_dyan}
\end{align}
\subsection{Analysis of GFM-DER Voltage Dynamics Loop}
\noindent Substituting~\eqref{eq:U_i_design_law} in~\eqref{eq:V_dyan} we get,
\begin{align}\label{eq:V_dot_expanded}
    \tau_{Q_i}\dot{V_i} &= -(V_i - \Vnom) - r_{V_i}Q_i + U_i^\star +  \tau_{Q_i}\dot{U}_i^\star \nonumber  \\
    & = -(V_i - \Vnom) - r_{V_i}Q_i + u_i - \beta_{Q_i} r_{V_i} \overline{Q}_i \\
    & \hspace{-0.15in} + \beta_{V_i} (\Vnom - V_i) + \tau_{Q_i}\dot{u}_i -  \tau_{Q_i}\beta_{Q_i} r_{V_i} \dot{Q}^\mathrm{avg}_i - \tau_{Q_i}\beta_{V_i} \dot{V}_i \nonumber. 
\end{align}
Let $\tilde{\beta}_{Q_i} := (1 + \beta_{Q_i}), \ \tilde{\beta}_{V_i} := (1 + \beta_{V_i}), \ \tilde{\beta}_i := \tilde{\beta}_{Q_i}/\tilde{\beta}_{V_i} \ \forall i.$
Substituting $\tau_{Q_i} \dot{\overline{Q}}_i = Q_i - \overline{Q}_i$ in~\eqref{eq:V_dot_expanded} we get for all $i$,
\begin{align}\label{eq:V_dot_expanded_intermediate}
    \tau_{Q_i}(1 + \beta_{V_i})\dot{V_i} &= -(1+\beta_{V_i})(V_i - \Vnom) - r_{V_i}Q_i \nonumber \\
    & \hspace{-0.4in} - \beta_{Q_i} r_{V_i} \overline{Q}_i -  \beta_{Q_i} r_{V_i} (Q_i - \overline{Q}_i) + u_i + \tau_{Q_i}\dot{u}_i \nonumber \\
    & \hspace{-0.8in} = -\tilde{\beta}_{V_i}(V_i - \Vnom) - \tilde{\beta}_{Q_i}r_{V_i}Q_i + u_i + \tau_{Q_i}\dot{u}_i,
\end{align}
Therefore, using~\eqref{eq:V_dot_expanded_intermediate} we have for all $i \in \{1,2,\dots,n\}$,
\begin{align}\label{eq:V_dot_expanded_final}
    \dot{V_i} = \textstyle -\frac{1}{\tau_{Q_i}}(V_i - \Vnom) - \frac{\tilde{\beta}_i r_{V_i}}{\tau_{Q_i}}Q_i + \frac{1}{\tau_{Q_i} \tilde{\beta}_{V_i}}u_i + \frac{1}{\tilde{\beta}_{V_i}}\dot{u}_i.
\end{align}
Let $V:=[V_1,\dots,V_n]^\top\in\mathbb{R}^n, \overline{Q}(V):=\overline{Q} =[\overline{Q}_1,\dots,$ $\overline{Q}_n]^\top\in\mathbb{R}^{n}$. We establish an invariance result for the closed-loop voltage dynamics around the nominal operating point $\mathbf{\Vnom}:=\Vnom\mathbf{1}_{n}$ and $\Qnom:=Q(\mathbf{\Vnom})$. The result provides an explicit forward-invariant set whose size is determined by the network parameters and controller design, characterizing closed-loop boundedness and performance.
\begin{theorem}\label{thm:centered_invariance}
Consider centered voltage and filtered reactive-power variables $\widetilde{V}(t) = V(t) - \mathbf \Vnom, \widetilde{q}(t) = \overline{Q}(t) - \Qnom$. Given radii $R_V>0$, $R_q>0$, define the set of centered states $\Omega(R_V,R_q) := \{(\widetilde V,\widetilde q):\|\widetilde V\|_2\le R_V,\;\|\widetilde q\|_2\le R_q\}$. Assume $\tau_{Q_i}\ge 1, \tilde\beta_{V_i}>0$ for all $i$. If $V_i>0$ for all $i$, then there exist constants $\alpha > 0, c_\star > 0, \nu > 0$ and $\rho_\Omega > 0$, defined explicitly from the system parameters and the radii $(R_V, R_q)$, satisfying 
$c_\star < \alpha \rho_\Omega$ and the sub-level set $\mathcal S_{\rho_\Omega}
:=
\{(\widetilde V,\widetilde q):\Psi(\widetilde V,\widetilde q)\le \rho_\Omega\}$ is forward invariant where, $\Psi(\widetilde V,\widetilde q)
:= \frac16 \widetilde V^\top \text{diag}(1/\tau_{Q_i}) \widetilde V
+\frac{\nu}{2}\|\widetilde q\|_2^2$. If $(\widetilde{V}(0), \widetilde{q}(0)) \in \Omega(R_V,R_q)$
then $\|\widetilde V(t)\|_2\le R_V, \mbox{and} \ \|\widetilde q(t)\|_2\le R_q \ \forall t \geq 0.$
\end{theorem}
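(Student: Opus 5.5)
We will build a Lyapunov-type argument for the centered coupled system formed by the voltage dynamics \eqref{eq:V_dot_expanded_final} and the filter $\tau_{Q_i}\dot{\overline{Q}}_i = Q_i - \overline{Q}_i$.

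\emph{Step 1: centered dynamics.} Write $Q(V) = \Qnom + \mathcal{J}(V)\widetilde V$ with $\mathcal{J}(V) := \int_0^1 \nabla Q(\mathbf{\Vnom} + s\widetilde V)\,ds$. This is legitimate since $Q$ in \eqref{eq:reactivepower} is a quadratic polynomial. On the ball $\|\widetilde V\|_2\le R_V$, \eqref{eq:reactivepower} gives an explicit bound $\|\mathcal{J}(V)\|_2\le L_Q := 2\|B\|_2(\Vnom + R_V)$ (more precisely, a row-sum bound in terms of $|B_{ii}|$ and $\sum_k|B_{ik}|$). The centered system then reads
\begin{align*}
\dot{\widetilde V} &= -T^{-1}\widetilde V - T^{-1}\widetilde{B}R\,(\Qnom + \mathcal{J}\widetilde V) + T^{-1}\tilde{B}_V^{-1}u + \tilde{B}_V^{-1}\dot u,\\
\dot{\widetilde q} &= T^{-1}(\mathcal{J}\widetilde V - \widetilde q),
\end{align*}
where $T=\text{diag}(\tau_{Q_i})$, $\widetilde B = \text{diag}(\tilde\beta_i)$, $R=\text{diag}(r_{V_i})$ and $\tilde B_V = \text{diag}(\tilde\beta_{V_i})$.

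\emph{Step 2: bounded exogenous terms.} The optimizer outputs are bounded. In \eqref{eq:gfm_optimization}, $x_{\s,i}\in D_i^{V_\Delta}(t_\s)$, and the center of $D_i^{V_\Delta}$ is affine in $(V_i(t_\s),\overline{Q}_i(t_\s))$. So while the state lies in $\Omega(R_V,R_q)$, $|x_{\s,i}|\le \bar u := |\tilde\beta_{Q_i}| r_{V_i}(|\Qnom_i|+R_q) + |\beta_{V_i}|R_V + V_\Delta$. Because the interpolation is convex, it preserves this bound for $u_i(t)$ by induction over sampling intervals. Also $|\dot u_i|\le 2\bar u/\Delta$. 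Together with the constant $\Qnom$ these terms form a bounded disturbance $w$ with $\|w\|_2\le \bar w$, where $\bar w$ is explicit. The assumption $\tau_{Q_i}\ge1$ serves to absorb the $\dot u$ term, which enters without a $1/\tau$ factor, into a bound weighted by $T^{-1}$.

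\emph{Step 3: dissipation inequality.} Differentiate $\Psi$. The $\frac16\widetilde V^\top T^{-1}\widetilde V$ part yields $-\frac13\widetilde V^\top T^{-2}\widetilde V$ plus cross terms. The $\nu$-part yields $-\nu\widetilde q^\top T^{-1}\widetilde q + \nu\widetilde q^\top T^{-1}\mathcal J\widetilde V$. The term $-\widetilde V^\top T^{-2}\widetilde B R\mathcal J\widetilde V$ is bounded by $\|\widetilde BR\|L_Q$ times the weighted norm. Apply Young's inequality to every cross term, choosing $\nu$ small enough relative to $L_Q$ and $\tau_{Q,\max}$. The coefficient $1/6$ is there so that $1/3$ splits into thirds for this bookkeeping. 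The target is
\[
\dot\Psi \le -\alpha\Psi + c_\star \quad\text{on } \Omega(R_V,R_q),
\]
where $\alpha$ depends on $\tau_{Q,\max}$, $\nu$ and the gain bounds, and $c_\star$ is proportional to $\bar w^2$.

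\emph{Step 4: invariance and radii.} Choose $\rho_\Omega$ as the largest sublevel value with $\mathcal S_{\rho_\Omega}\subseteq\Omega(R_V,R_q)$. Explicitly, $\rho_\Omega = \min\{R_V^2/(6\tau_{Q,\max}),\ \nu R_q^2/2\}$, which follows from the quadratic form bounds on $\Psi$. The radii must satisfy $c_\star<\alpha\rho_\Omega$. Where necessary this is arranged by stating the admissible radii and parameters explicitly, since both $c_\star$ and $\alpha$ depend on $(R_V,R_q)$. On the boundary $\Psi=\rho_\Omega$ we then have $\dot\Psi\le -\alpha\rho_\Omega+c_\star<0$, and a standard comparison/contradiction argument, using the first exit time, gives forward invariance. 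The sampled-data structure is handled by this argument because the bounds of Step 2 hold as long as the state has not left $\Omega$, so the first-exit-time argument is self-consistent. Positivity of $V_i$ is used only to keep the expression \eqref{eq:reactivepower} and the domain meaningful.

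\emph{Main obstacle.} The hardest step is the final claim for initial data in $\Omega(R_V,R_q)$, as opposed to data in $\mathcal S_{\rho_\Omega}$: $\Omega$ is not itself a sublevel set of $\Psi$. I would first try to match the weights, choosing $\nu$ so that $\Omega\subseteq\mathcal S_{\rho'}$ for some $\rho'$ with $\mathcal S_{\rho'}\subseteq\Omega(R_V',R_q')$. This means proving invariance on a slightly inflated ball with the same constants, which absorbs the anisotropy through the ratio $\tau_{Q,\max}/\tau_{Q,\min}$. Failing that, the statement must be read as asserting the conclusion for initial data in $\Omega\cap\mathcal S_{\rho_\Omega}$.
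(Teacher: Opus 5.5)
\textbf{Verdict: genuine gap, in your Step~3.} The problem is the reactive self-coupling term $-\frac13\widetilde V^\top T^{-2}\widetilde BR\,\mathcal J(V)\widetilde V$ in $\dot\Phi$. This term is already quadratic in $\widetilde V$, with a fixed coefficient of order $\max_i|\tilde\beta_i|r_{V_i}L_Q/\tau_{\min}^2$. Young's inequality cannot make it smaller, and $\nu$ does not enter it at all.

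Bounding it through $L_Q$, as you propose, therefore eats into the dissipation $\frac{1}{3\tau_{\max}^2}\|\widetilde V\|_2^2$. Your $\alpha$ is then positive only under a small-gain condition such as $\max_i|\tilde\beta_i|r_{V_i}L_Q<\tau_{\min}^2/\tau_{\max}^2$. That condition is not among the hypotheses. It is also very restrictive for the sharing design $\beta_{V_i}=-1+\mu$ of Theorem~\ref{thm:steady_state}, where $\tilde\beta_i=1/\mu$.

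The paper keeps this term out of $\alpha$. It substitutes $\widetilde V=V-\mathbf{\Vnom}$ in $\mathcal T_Q$ and isolates $-\frac13V^\top\boldsymbol\tau\boldsymbol\tau_{\beta r}Q(V)=-\frac13V^\top(\Sigma(V)\mathcal B+\mathcal B^\top\Sigma(V))V$, which it discards as nonpositive by Lemma~\ref{lem:psd_proof}. This is where $V_i>0$ really enters, through $\Sigma(V)\succ0$, not merely to keep \eqref{eq:reactivepower} meaningful. The leftover terms are bounded on $\Omega(R_V,R_q)$ by a constant times $\|\widetilde V\|_2$ plus a constant. Young's inequality with free $\varepsilon_{r,1},\varepsilon_{r,2}$ sends them into $d_r(R_V)$, so $c_V>0$, and hence $\alpha>0$, can always be arranged.

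Two caveats apply if you adopt this route.
\begin{enumerate}
\item The inequality actually needed is $\sum_iw_iV_iQ_i(V)\ge0$ with $w_i=\tilde\beta_ir_{V_i}/\tau_{Q_i}^2$. It holds cleanly for equal positive weights, via $\sum_iV_iQ_i(V)=\sum_i|B^{\mathrm{sh}}_i|V_i^3+\sum_{i<k}|B_{ik}|(V_i-V_k)^2(V_i+V_k)$. It can fail for strongly heterogeneous weights.
\item Since $L_Q\ge d_BR_V$ and $\rho_\Omega\le R_V^2/(6\tau_{\max})$, the ratio $d_r(R_V)/\rho_\Omega$ is bounded below independently of the radii. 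A gain-type restriction is therefore still hidden in $c_\star<\alpha\rho_\Omega$.
\end{enumerate}
The same thing would happen if you simply bounded your term on $\Omega$ by a multiple of $L_QR_V\|\widetilde V\|_2$ and pushed it into $c_\star$. That would also give $\alpha>0$. Either way, the restriction should be stated explicitly.

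Apart from this, your plan matches the paper's proof. Your bound on $x_\s$ via $x_{\s,i}\in D_i^{V_\Delta}(t_\s)$ is more direct than the paper's detour through $\bar\alpha$. The remaining ingredients are the same: the convex-interpolation induction for $u$ and the $2(\cdot)/\Delta$ bound on $\dot u$, the local Lipschitz bound on $Q$, and the choice of $\nu$ for the filter cross term. So are $\rho_\Omega=\min\{R_V^2/(6\tau_{\max}),\nu R_q^2/2\}$ and the first-exit-time plus comparison argument. The induction also needs an initial bound on $u(0)$, which the paper assumes as $\|u(0)-\unom\|_2\le X_\star$.

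Your ``main obstacle'' is correctly diagnosed. The paper's proof also obtains the bounds only when $\Psi(0)\le\rho_\Omega$, i.e., for initial data in $\mathcal S_{\rho_\Omega}\subseteq\Omega(R_V,R_q)$. It also imposes $c_\star<\alpha\rho_\Omega$ rather than deriving it. The final sentence of the statement is therefore established only with $\mathcal S_{\rho_\Omega}$ in place of $\Omega(R_V,R_q)$, as you suggest.
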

\begin{proof}
Define $u(t) := [u_1(t), \dots, u_{n}(t)]^\top \in \mathbb{R}^{n}$, and $\dot{u}(t) := [\dot{u}_1(t), \dots, \dot{u}_{n}(t)]^\top \in \mathbb{R}^{n}$, $\unom_i :=\tilde\beta_{V_i}\tilde\beta_i r_{V_i} \Qnom_i$, for all $i$, $\unom := [\unom_1, \dots, \unom_{n}]^\top \in \mathbb{R}^{n}$, $d_i := |B_{i i}| + \textstyle \sum_{k\in N_i}|B_{i \kk}|$ for all $i$, and $d_B := \sqrt{\sum_{i=1}^{n} d_i^2}$. Fix $(R_V,R_q)>0$, define $\Omega(R_V,R_q) := \{(\widetilde V,\widetilde q):\|\widetilde V\|_2\le R_V,\;\|\widetilde q\|_2\le R_q\}, 
L_Q:= d_B \bigl(2\|\Vnom\|_2+R_V\bigr)$, and 
\begin{equation}\label{eq:constants}
\begin{aligned}
B_1 &= \max_i |1+\beta_{Q_i}|+\sqrt{n}\max_i |\beta_{V_i}|, \\
B_2 &= \max_i |\beta_{V_i}r_{V_i}|+\sqrt{n}\max_i |(1+\beta_{Q_i})r_{V_i}|, \\
B_0 &= \sqrt{n}V_\Delta + B_2\|\Qnom\|_2 + \|\unom\|_2, \\
X_\star &:=  \textstyle B_0+B_1R_V+B_2R_q, \  
D_\star:=\frac{2X_\star}{\Delta}. 
\end{aligned}
\end{equation}
Let $ \tau_{\min}:=\min_i \tau_{Q_i}, \tau_{\max}:=\max_i \tau_{Q_i}, \tilde\beta_{V,\min}:=\min_i \tilde \beta_{V_i},\boldsymbol{\tau}_{\beta r} := \text{diag}(\frac{\tilde{\beta}_i r_{V_i}}{\tau_{Q_i}}), \boldsymbol{\tau}_{\beta_V} := \text{diag}(1/(\tilde{\beta}_{V_i} \tau_{Q_i})), $ $ \boldsymbol{\beta}_V := \text{diag}(1/\tilde{\beta}_{V_i}), \boldsymbol{\tau} := \text{diag}(1/\tau_{Q_i}),$
and $K_u \le \frac{1}{3\tilde\beta_{V,\min}\tau_{\min}^2}, K_d\le \frac{1}{3\tilde\beta_{V,\min}\tau_{\min}}.$ Choose $\varepsilon_{r,1},\varepsilon_{r,2},\varepsilon_u,\varepsilon_d>0$, and 
$\varepsilon_q\in(0,2/\tau_{\max})$. Define $\varepsilon_r:=\varepsilon_{r,1}+\varepsilon_{r,2}$, and 
$d_r(R_V) := \textstyle  \frac{1}{18\varepsilon_{r,1}}
\|\boldsymbol{\tau}_{\beta r}\boldsymbol{\tau} \Qnom \|_2^2  + 
\frac{L_Q^2}{18\varepsilon_{r,2}}
\|\boldsymbol{\tau}\boldsymbol{\tau}_{\beta r}\mathbf{\Vnom}\|_2^2,  + \frac{1}{3} |
\mathbf{\Vnom}^\top
\boldsymbol{\tau}\boldsymbol{\tau}_{\beta r}\Qnom|, 
c_V := \textstyle \frac{1}{3\tau_{\max}^2}-\frac{\varepsilon_r+\varepsilon_u+\varepsilon_d}{2}$.
$\varepsilon_{r,1},\varepsilon_{r,2},\varepsilon_u,\varepsilon_d>0$ be chosen such that $c_V > 0$. Further, let $\nu>0, \varepsilon_q > 0$ be such that
$0<\nu< \frac{2\varepsilon_q\tau_{\min}^2\,c_V}{L_Q^2}.$ Let $a_V := \textstyle c_V-\nu\frac{L_Q^2}{2\varepsilon_q\tau_{\min}^2}, \ a_q := \nu\Bigl(\frac{1}{\tau_{\max}}-\frac{\varepsilon_q}{2}\Bigr), 
M_\Psi := \textstyle \max\left\{\frac{1}{6\tau_{\min}},\frac{\nu}{2}\right\}, \ 
m_\Psi:=\min\left\{\frac{1}{6\tau_{\max}},\frac{\nu}{2}\right\}, 
\alpha := \textstyle \frac{\min\{a_V,a_q\}}{M_\Psi}, \
c_\star := \textstyle d_r(R_V)
+\frac{K_u^2}{2\varepsilon_u}X_\star^2
+\frac{K_d^2}{2\varepsilon_d}D_\star^2,
\rho_\Omega := \textstyle \min\left\{\frac{R_V^2}{6\tau_{\max}},
\frac{\nu R_q^2}{2}
\right\}.$
We divide the proof into several steps.

\textbf{Step 1: Centered dynamics.}
Since $ V = \widetilde{V} + \mathbf{\Vnom}, 
\overline{Q}=\widetilde q + \Qnom,$
the voltage dynamics~\eqref{eq:V_dot_expanded} can be written as
\begin{align}
\hspace{-0.1in} \dot{\widetilde V}
=
-\boldsymbol\tau \widetilde V
-\boldsymbol\tau_{\beta r}\bigl(Q(V) - \Qnom\bigr) + \boldsymbol\tau_{\beta_V}(u - \unom)
+\boldsymbol\beta_V \dot u. \label{eq:centered_voltage_dyn_proof}
\end{align}
Also, using $\tau_{Q_i}\dot{\overline{Q}}_i = -\overline{Q}_i + Q_i,$
\begin{equation}\label{eq:centered_filter_dyn_proof}
\dot{\widetilde q}
= -\boldsymbol{\tau} \widetilde q + \boldsymbol{\tau}\bigl(Q(V) - \Qnom \bigr).
\end{equation}

\textbf{Step 2: Bound the optimizer $x_\s$.}
Because of consensus constraints $x_{\s,i} = x_{\s,\jj}$, every feasible solution has the form $ x_\s = c_{x_\s} \mathbf{1}_{n}$,
for some scalar $c_{x_\s} \in \mathbb R$. Therefore, $\|x\|_2 = \sqrt{n} |c_{x_\s}|.$
Next, from~\eqref{eq:gfm_optimization},
$ \alpha_i (t_\s)
=
(1+\beta_{Q_i})\Vnom - (1+\beta_{Q_i})V_i - \beta_{V_i}r_{V_i}Q_i^{\mathrm{avg}}(t_\s).$
Thus,
\begin{align*}
\|\alpha \|_2
\le &  \max_i |1+\beta_{Q_i}| \|\mathbf{\Vnom} - V\|_2 +  
\max_i |\beta_{V_i}r_{V_i}| \|\overline{Q}\|_2.
\end{align*}
Let $L_\alpha^V := \max_i |1+\beta_{Q_i}|, 
L_\alpha^Q := \textstyle \max_i |\beta_{V_i}r_{V_i}|.$
Then, $\|\alpha\|_2
\le L_\alpha^V \|\mathbf{\Vnom} - V\|_2 + L_\alpha^Q \|\overline{Q}\|_2.$
Define, $ \overline{\alpha} := \frac{1}{n}\sum_{i=1}^{n} \alpha_i$,
and by Cauchy-Schwarz inequality,
\begin{align*}
|\bar \alpha|
= \textstyle
\frac{1}{n} |\mathbf{1}_{n}^\top \alpha|
& \textstyle \le
\frac{1}{\sqrt{n}}\|\alpha\|_2 \leq \textstyle \frac{L_\alpha^V}{\sqrt n}\|\mathbf{\Vnom} - V\|_2
+
\frac{L_\alpha^Q}{\sqrt n}\|\overline{Q}\|_2.
\end{align*}
Now consider sets $D_i^{V_\Delta}$ in~\eqref{eq:gfm_optimization}, with the centers given by $m_i =
(1+\beta_{Q_i})r_{V_i}Q_i^{\mathrm{avg}}(t_\s)
- \beta_{V_i}(\Vnom - V_i (t_\s))$
Therefore, $|m_i| \le |\beta_{V_i}||\Vnom -  V_i (t_\s)| + |(1+\beta_{Q_i})r_{V_i}||Q_i^{\mathrm{avg}}(t_\s)|.$
Taking the maximum over $i$ yields $ \|m\|_\infty
\le m_V\|V(t_\s) - \Vnom\|_2 + m_Q\|\overline{Q}(t_\s)\|_2$,
where $m_V := \max_i |\beta_{V_i}|, m_Q := \textstyle \max_i |(1+\beta_{Q_i})r_{V_i}|.$
Therefore, the solution $x_\s = c_{x_\s} \mathbf{1}_{n}$ is bounded, $ |c_{x_\s}|
\le
|\bar\alpha| + \|m\|_\infty + V_\Delta.$
Substituting the previously derived bounds gives $|c_{x_\s}|
\leq \left(\frac{L_\alpha^V}{\sqrt{n}} + m_V \right) \|V(t_\s) - \mathbf{\Vnom}\|_2
+ V_\Delta + \textstyle  \left(\frac{L_\alpha^Q}{\sqrt{n}} + m_Q \right) \|\overline{Q}(t_\s)\|_2$
Multiplying by $\sqrt{n}$ and using $\|x_\s \|_2=\sqrt{n} |c_{x_\s}|$, we conclude
\begin{align}\label{eq:bound_on_xs}
\|x_\s \|_2
\le B_1\|\tilde{V}(t_\s)\|_2 + B_2\|\overline{Q}(t_\s)\|_2 + \sqrt{n}V_\Delta,
\end{align}
where $B_1,B_2$ are exactly those given in~\eqref{eq:constants}. 
Hence, by the triangle inequality,
\begin{align}
\|x_\s - \unom\|_2
&\le
\|x_\s\|_2+\|\unom\|_2 \le
\sqrt{n}V_\Delta
+
B_1\|\widetilde V(t_\s)\|_2
\nonumber \\ & \hspace{0.2in} +
B_2\|\widetilde q(t_\s)\|_2 +
B_2\|\Qnom\|_2
+
\|\unom\|_2 \nonumber\\
&=
B_0 + B_1\|\widetilde V(t_\s)\|_2+B_2\|\widetilde q(t_\s)\|_2.
\label{eq:x_minus_unom_bound_proof}
\end{align}
Now assume $(\widetilde V(t_\s),\widetilde q(t_\s))\in \mathcal{S}_{\rho_\Omega}$, then $ \|\widetilde V(t_\s)\|_2\le R_V, \ \|\widetilde q(t_\s)\|_2\le R_q,$
and so~\eqref{eq:x_minus_unom_bound_proof} yields
\begin{equation}\label{eq:x_minus_unom_bound_region_proof}
\|x_\s - \unom\|_2
\le
B_0+B_1R_V+B_2R_q
=
X_\star.
\end{equation}

\textbf{Step 3: Explicit bounds on $u - \unom$ and $\dot{u}$.}
For $t\in[t_\s,t_\s + \Delta_\s)$, the interpolation law gives $ u(t) = u(t_\s)$. And for $t\in[t_\s + \Delta_\s,t_{\s+1})$, the interpolation law gives
\begin{align*}
u(t)
= \textstyle \Bigl(1-\frac{t - t_\s - \Delta_\s}{\Delta}\Bigr)u(t_\s)
+ \frac{t - t_\s - \Delta_\s }{\Delta}x_\s.
\end{align*}
Subtracting $\unom$ for $t \in [t_\s, t_{\s+1})$, we have
$u(t) - \unom$
$= \begin{cases}
u(t_\s) - \unom \\
\big(1-\frac{t-t_\s-\Delta_\s}{\Delta}\big)(u(t_\s) - \unom)
+
\frac{t-t_\s-\Delta_\s}{\Delta}(x_s - \unom).
\end{cases} 
$ 

Since both coefficients are nonnegative and sum to one,
\begin{align}
\|u(t) - \unom\|_2
&\le \textstyle 
\Bigl(1-\frac{t-t_\s-\Delta_\s}{\Delta}\Bigr)\|u(t_s)- \unom\|_2 \nonumber 
\\
& \textstyle \hspace{0.2in}+ \frac{t-t_\s-\Delta_\s}{\Delta}\|x_\s - \unom\|_2 \nonumber\\
& \hspace{-0.4in}\le
\max \bigl\{\|u(t_\s) - \unom\|_2, \|x_\s - \unom\|_2 \bigr\}.\label{eq:u_convex_bound_proof}
\end{align}
Assume $\|u(0) - \unom\|_2\le X_\star$. Then, using~\eqref{eq:x_minus_unom_bound_region_proof}, we prove by induction over the sampling intervals $t \in [t_\s, t_{\s+1})$ that
\begin{equation}\label{eq:u_minus_unom_global_bound_proof}
\|u(t) - \unom\|_2\le X_\star.
\end{equation}
as long as $(\widetilde V(t),\widetilde q(t))\in \mathcal{S}_{\rho_\Omega}$. Indeed, it is true at $t=0$ by assumption. If it holds at $t=t_\s$, then~\eqref{eq:u_convex_bound_proof} and~\eqref{eq:x_minus_unom_bound_region_proof} imply it holds for all $t\in[t_\s,t_{\s+1})$. Also, $\dot u(t)= \begin{cases}
    0 & t\in[t_\s + \Delta_\s)\\
    \frac{x_\s - u(t_\s)}{\Delta} & t\in[t_\s + \Delta_\s,t_{\s+1})
\end{cases}$.

\begin{align}
\hspace{-0.6in}\mbox{Hence,} \ \|\dot u(t)\|_2
& \textstyle \le \frac{1}{\Delta}\bigl(\|x_\s - \unom\|_2+\|u(t_\s) -\unom\|_2\bigr) \nonumber\\
& \textstyle \le \frac{1}{\Delta}(X_\star+X_\star)
= \frac{2X_\star}{\Delta}
= D_\star. \label{eq:udot_bound_proof}
\end{align}
Therefore, on $\Omega(R_V,R_q)$,
\begin{equation}\label{eq:u_udot_explicit_bounds_proof}
\|u(t) - \unom \|_2\le X_\star,
\quad
\|\dot u(t)\|_2 \le D_\star.
\end{equation}

\textbf{Step 4: Local Lipschitz bound on $Q(V) - \Qnom$.}
From the quadratic structure of the reactive-power map,
$\|Q(V) - Q(W)\|_2
\le
d_B\bigl(\|V\|_2+\|W\|_2\bigr)\|V-W\|_2.$
Apply this with $W = \mathbf{\Vnom}$. Then, $\|Q(V) - \Qnom\|_2
\le
d_B\bigl(\|V\|_2+\|\mathbf \Vnom\|_2\bigr)\|V - \Vnom\|_2.$
Since $V=\widetilde V+\mathbf \Vnom, \ 
\|V\|_2\le \|\widetilde V\|_2+\|\mathbf \Vnom\|_2,
$
and on $\Omega(R_V,R_q)$ we have $\|\widetilde V\|_2\le R_V$, it follows that $\|V\|_2\le R_V+\|\mathbf \Vnom\|_2.$
Hence
\begin{align}
\|Q(V) - \Qnom\|_2
&\le
d_B\bigl(R_V+\|\mathbf \Vnom\|_2+\|\mathbf \Vnom\|_2\bigr)\|\widetilde V\|_2 \nonumber\\
&\hspace{-0.5in} =
d_B\bigl(2\|\mathbf \Vnom\|_2+R_V\bigr)\|\widetilde V\|_2 = L_Q\|\widetilde V\|_2.
\label{eq:LQ_bound_proof}
\end{align}

\textbf{Step 5: Derivative of the voltage Lyapunov function.}
Define $\Phi(\widetilde V):=\frac16 \widetilde V^\top \boldsymbol\tau \widetilde V.$ Then $\dot\Phi
=
\dot{\widetilde{V}}^\top \frac16 \boldsymbol\tau \widetilde V
+
\widetilde V^\top \frac16 \boldsymbol\tau \dot{\widetilde V}
= \frac13 \widetilde V^\top \boldsymbol\tau \dot{\widetilde V}$,
since $\boldsymbol\tau$ is diagonal and symmetric. Using~\eqref{eq:centered_voltage_dyn_proof},
\begin{align}
\dot\Phi
&= \textstyle -\frac13 \widetilde V^\top \boldsymbol\tau^2 \widetilde V
-\frac13 \widetilde V^\top \boldsymbol\tau\boldsymbol\tau_{\beta r}(Q - \Qnom) \nonumber\\
&\hspace{0.3in} \textstyle 
+\frac13 \widetilde V^\top \boldsymbol\tau\boldsymbol\tau_{\beta_V}(u - \unom)
+\frac13 \widetilde V^\top \boldsymbol\tau\boldsymbol\beta_V\dot u.
\label{eq:Phi_dot_centered_intermediate_proof}
\end{align}
To make the reactive term symmetric, write
\begin{align}
& \hspace{-0.5in} \dot\Phi
= \textstyle
-\frac13 \widetilde V^\top \boldsymbol\tau^2 \widetilde V
+\mathcal{T}_Q
+\frac13 \widetilde V^\top \boldsymbol\tau\boldsymbol\tau_{\beta_V}(u - \unom) +\frac13 \widetilde V^\top \boldsymbol\tau\boldsymbol\beta_V \dot u,
\label{eq:Phi_dot_centered_with_TQ_proof} \\
\hspace{-0.1in} \mbox{where} \ \mathcal{T}_Q
 & \textstyle := -\frac13 \widetilde V^\top \boldsymbol\tau\boldsymbol\tau_{\beta r}(Q - \Qnom) \nonumber \\
 & \hspace{-0.25in} \textstyle = 
-\frac16 \widetilde V^\top \boldsymbol\tau\boldsymbol\tau_{\beta r}(Q-\Qnom)
-\frac16 (Q - \Qnom)^\top \boldsymbol\tau_{\beta r}\boldsymbol\tau \widetilde V. \label{eq:TQ_def_proof}
\end{align}

We now estimate $\mathcal{T}_Q$ using the improved centered decomposition. Since $\widetilde V = V - \mathbf \Vnom$,
we expand~\eqref{eq:TQ_def_proof} as
\begin{align}
&\textstyle 
-\frac16 (V - \mathbf \Vnom)^\top \boldsymbol\tau\boldsymbol\tau_{\beta r}(Q - \Qnom) -\frac16 (Q - \Qnom)^\top \boldsymbol\tau_{\beta r}\boldsymbol\tau (V - \mathbf \Vnom) \nonumber\\
& = \textstyle 
-\frac16 V^\top \boldsymbol\tau\boldsymbol\tau_{\beta r}Q
-\frac16 Q^\top \boldsymbol\tau_{\beta r}\boldsymbol\tau V 
+\frac16 V^\top \boldsymbol\tau\boldsymbol\tau_{\beta r} \Qnom
 \nonumber\\
&\textstyle \hspace{0.2in} +\frac16 (\Qnom)^\top \boldsymbol\tau_{\beta r}\boldsymbol\tau V 
+\frac16 (\mathbf \Vnom)^\top \boldsymbol\tau\boldsymbol\tau_{\beta r}Q
+\frac16 Q^\top \boldsymbol\tau_{\beta r}\boldsymbol\tau \mathbf \Vnom \nonumber\\
&\hspace{0.2in} \textstyle 
-\frac16 (\mathbf \Vnom)^\top \boldsymbol\tau\boldsymbol\tau_{\beta r}\Qnom
-\frac16 (\Qnom)^\top \boldsymbol\tau_{\beta r}\boldsymbol\tau \mathbf \Vnom. \label{eq:TQ_expanded_full_proof}
\end{align}
Rearranging terms, we have
\begin{align}
\mathcal{T}_Q &= \textstyle -\frac16 V^\top \boldsymbol\tau\boldsymbol\tau_{\beta r}Q
-\frac16 Q^\top \boldsymbol\tau_{\beta r}\boldsymbol\tau V +\frac13 (\Qnom)^\top \boldsymbol\tau_{\beta r}\boldsymbol\tau \widetilde V \nonumber\\
& \textstyle \hspace{0.2in}
+\frac13 (\mathbf \Vnom)^\top \boldsymbol\tau\boldsymbol\tau_{\beta r}Q.
\label{eq:TQ_rearranged_proof}
\end{align}
Using $Q(V)=\mathrm{diag}(V)\mathcal B V$,
and defining $\Sigma(V):=\mathrm{diag}\left(\tilde\beta_i r_{V_i}V_i/2\tau_{Q_i}^2\right)$,
the first two terms in~\eqref{eq:TQ_rearranged_proof} become
$-\frac13 V^\top \bigl(\Sigma(V)\mathcal B+\mathcal B^\top \Sigma(V)\bigr)V.$
Note that the matrix $\Sigma(V) \mathcal{B} + \mathcal{B}^\top \Sigma(V)$ is positive semi-definite (see Lemma~\ref{lem:psd_proof}),
and therefore $
\textstyle -\frac13 V^\top \bigl(\Sigma(V)\mathcal B+\mathcal B^\top \Sigma(V)\bigr)V \le 0.$
Hence,
\begin{align}
\mathcal{T}_Q
& \textstyle \le
\frac13 (\Qnom)^\top \boldsymbol\tau_{\beta r}\boldsymbol\tau \widetilde V
+\frac13 (\mathbf \Vnom)^\top \boldsymbol\tau\boldsymbol\tau_{\beta r}Q. \label{eq:TQ_after_PSD_drop_proof}
\end{align}
We now bound the two remaining terms. For the first term, Young's inequality gives
\begin{align}
\left| \textstyle 
\frac13 (\Qnom)^\top \boldsymbol\tau_{\beta r}\boldsymbol\tau \widetilde V
\right|
& \textstyle \le
\frac{\varepsilon_{r,1}}{2}\|\widetilde V\|_2^2
+
\frac{1}{18\varepsilon_{r,1}}
\|\boldsymbol\tau_{\beta r}\boldsymbol\tau \Qnom\|_2^2.
\label{eq:TQ_term1_bound_proof}
\end{align}
For the second term, using~\eqref{eq:LQ_bound_proof},
\begin{align}
& \left| \textstyle
\frac13 (\mathbf \Vnom)^\top \boldsymbol\tau\boldsymbol\tau_{\beta r}Q
\right| \le \textstyle
\frac13
\|\boldsymbol\tau\boldsymbol\tau_{\beta r}\mathbf \Vnom\|_2
\|Q - \Qnom\|_2 + \left| \textstyle
\frac{1}{3} \mathbf \Vnom^\top \boldsymbol\tau\boldsymbol\tau_{\beta r}\Qnom
\right| \nonumber \\
& \textstyle \le
\frac13
\|\boldsymbol\tau\boldsymbol\tau_{\beta r}\mathbf \Vnom\|_2
L_Q\|\widetilde V\|_2 + \left| \textstyle
\frac{1}{3} \mathbf \Vnom^\top \boldsymbol\tau\boldsymbol\tau_{\beta r}\Qnom
\right| \nonumber\\
& \textstyle \le
\frac{\varepsilon_{r,2}}{2}\|\widetilde V\|_2^2
+
\frac{L_Q^2}{18\varepsilon_{r,2}}
\|\boldsymbol\tau\boldsymbol\tau_{\beta r}\mathbf \Vnom\|_2^2 + \left| \textstyle
\frac{1}{3} \mathbf \Vnom^\top \boldsymbol\tau\boldsymbol\tau_{\beta r}\Qnom
\right|.
\label{eq:TQ_term2_bound_proof}
\end{align}
Combining~\eqref{eq:TQ_after_PSD_drop_proof},~\eqref{eq:TQ_term1_bound_proof}, and~\eqref{eq:TQ_term2_bound_proof}, we get
\begin{align}
\mathcal{T}_Q
& \le \textstyle \frac{\varepsilon_r}{2}\|\widetilde V\|_2^2+d_r(R_V),
\label{eq:TQ_final_bound_proof}
\end{align}
where $\varepsilon_r=\varepsilon_{r,1}+\varepsilon_{r,2}$. Next, the quadratic term satisfies
\begin{align}
\textstyle -\frac13 \widetilde V^\top \boldsymbol\tau^2 \widetilde V
& \textstyle \le
-\frac{1}{3\tau_{\max}^2}\|\widetilde V\|_2^2,
\label{eq:tau2_dissipation_proof}
\end{align}
because the smallest eigenvalue of $\boldsymbol\tau^2$ is $1/\tau_{\max}^2$. For the control term involving $u - \unom$, using~\eqref{eq:u_udot_explicit_bounds_proof} and Young's inequality,
\begin{align}
\textstyle \frac13 \widetilde V^\top \boldsymbol\tau\boldsymbol\tau_{\beta_V}(u - \unom)
&\le
K_u\|\widetilde V\|_2\|u - \unom\|_2 \le \textstyle
\frac{\varepsilon_u}{2}\|\widetilde V\|_2^2 \nonumber
\\ 
& \hspace{-0.4in} \textstyle +
\frac{K_u^2}{2\varepsilon_u}\|u - \unom\|_2^2 \le
\frac{\varepsilon_u}{2}\|\widetilde V\|_2^2
+
\frac{K_u^2}{2\varepsilon_u}X_\star^2.
\label{eq:u_term_bound_proof}
\end{align}
Similarly, for the $\dot u$ term,
\begin{align}
\textstyle \frac13 \widetilde V^\top \boldsymbol\tau\boldsymbol\beta_V\dot u
& \textstyle \le
K_d\|\widetilde V\|_2\|\dot u\|_2 \le
\frac{\varepsilon_d}{2}\|\widetilde V\|_2^2
+
\frac{K_d^2}{2\varepsilon_d}\|\dot u\|_2^2 \nonumber\\
&\textstyle \le
\frac{\varepsilon_d}{2}\|\widetilde V\|_2^2
+
\frac{K_d^2}{2\varepsilon_d}D_\star^2.
\label{eq:udot_term_bound_proof}
\end{align}
Substituting~\eqref{eq:TQ_final_bound_proof}, \eqref{eq:tau2_dissipation_proof}, \eqref{eq:u_term_bound_proof}, and \eqref{eq:udot_term_bound_proof} into~\eqref{eq:Phi_dot_centered_with_TQ_proof} yields
\begin{align}
\dot\Phi
&\textstyle \le
-\left(
\frac{1}{3\tau_{\max}^2}
-\frac{\varepsilon_r+\varepsilon_u+\varepsilon_d}{2}
\right)\|\widetilde V\|_2^2
+d_r(R_V) \nonumber\\
&\hspace{0.2in}
\textstyle +
\frac{K_u^2}{2\varepsilon_u}X_\star^2
+
\frac{K_d^2}{2\varepsilon_d}D_\star^2 \nonumber\\
&=
\textstyle -c_V\|\widetilde V\|_2^2
+d_r(R_V)
+
\frac{K_u^2}{2\varepsilon_u}X_\star^2
+
\frac{K_d^2}{2\varepsilon_d}D_\star^2.
\label{eq:Phi_dot_final_proof}
\end{align}

\textbf{Step 6: Derivative of the filter-energy term.}
From~\eqref{eq:centered_filter_dyn_proof}, we have $\frac12\frac{d}{dt}\|\widetilde q\|_2^2
= \widetilde q^\top \dot{\widetilde q} =
-\widetilde q^\top \boldsymbol{\tau}\widetilde q
+ \widetilde q^\top \boldsymbol{\tau}(Q-\Qnom).$
Since
$ \textstyle \boldsymbol{\tau} \succeq \frac{1}{\tau_{\max}}I,
\
\|\boldsymbol{\tau}\|\le \frac{1}{\tau_{\min}},
$
we obtain $\frac12\frac{d}{dt}\|\widetilde q\|_2^2 \le
-\frac{1}{\tau_{\max}}\|\widetilde q\|_2^2
+
\frac{1}{\tau_{\min}}\|\widetilde q\|_2\|Q - \Qnom\|_2.$
Using~\eqref{eq:LQ_bound_proof}, $\frac12\frac{d}{dt}\|\widetilde q\|_2^2 \le -\frac{1}{\tau_{\max}}\|\widetilde q\|_2^2
+ \frac{L_Q}{\tau_{\min}}\|\widetilde q\|_2\|\widetilde V\|_2$.
Applying Young's inequality with parameter $\varepsilon_q$, $\frac{L_Q}{\tau_{\min}}\|\widetilde q\|_2\|\widetilde V\|_2
\le \frac{\varepsilon_q}{2}\|\widetilde q\|_2^2
+ \frac{L_Q^2}{2\varepsilon_q\tau_{\min}^2}\|\widetilde V\|_2^2.$
\begin{align}
\hspace{-0.1in} \mbox{Thus,} \ \textstyle \frac12\frac{d}{dt}\|\widetilde q\|_2^2
& \textstyle \le
-\Bigl(\frac{1}{\tau_{\max}}-\frac{\varepsilon_q}{2}\Bigr)\|\widetilde q\|_2^2
+
\frac{L_Q^2}{2\varepsilon_q\tau_{\min}^2}\|\widetilde V\|_2^2.
\label{eq:q_energy_final_proof}
\end{align}

\textbf{Step 7: Derivative of the composite Lyapunov function and invariance.}
Define $\Psi(\widetilde V,\widetilde q)
=
\Phi(\widetilde V)+\frac{\nu}{2}\|\widetilde q\|_2^2.$
Multiplying~\eqref{eq:q_energy_final_proof} by $\nu$ and adding to~\eqref{eq:Phi_dot_final_proof}, we get $\dot\Psi
\le
-\left(
c_V-\nu\frac{L_Q^2}{2\varepsilon_q\tau_{\min}^2}
\right)\|\widetilde V\|_2^2 -\nu\Bigl(\frac{1}{\tau_{\max}}-\frac{\varepsilon_q}{2}\Bigr)\|\widetilde q\|_2^2
+ c_\star,$
where
$c_\star= d_r(R_V)
+\frac{K_u^2}{2\varepsilon_u}X_\star^2 +\frac{K_d^2}{2\varepsilon_d}D_\star^2.$
Due to the choice of $\nu$, $a_V=
c_V-\nu\frac{L_Q^2}{2\varepsilon_q\tau_{\min}^2}>0.$
Also, since $\varepsilon_q<2/\tau_{\max}$, $a_q=
\nu\Bigl(\frac{1}{\tau_{\max}}-\frac{\varepsilon_q}{2}\Bigr)>0.$
Therefore,
\begin{align}
\dot\Psi
&\le
-a_V\|\widetilde V\|_2^2-a_q\|\widetilde q\|_2^2+c_\star.
\label{eq:Psi_dot_with_aVaQ_proof}
\end{align}
Next, by definition of $M_\Psi$, $\Psi(\widetilde V,\widetilde q)
\le M_\Psi\bigl(\|\widetilde V\|_2^2+\|\widetilde q\|_2^2\bigr).$
Hence, $a_V\|\widetilde V\|_2^2+a_q\|\widetilde q\|_2^2
\ge \frac{\min\{a_V,a_q\}}{M_\Psi}\Psi
= \alpha \Psi.$
Substituting this into~\eqref{eq:Psi_dot_with_aVaQ_proof}, we obtain
\begin{align}
\dot\Psi
&\le
-\alpha\Psi+c_\star.
\label{eq:Psi_dot_scalar_proof}
\end{align}
Since, $c_\star < \alpha \rho_\Omega$, 
whenever $\Psi=\rho_\Omega$, $ \dot\Psi\le -\alpha\rho_\Omega+c_\star<0.$ 
The estimates leading to
\eqref{eq:Psi_dot_scalar_proof} are valid whenever
$(\widetilde V(t),\widetilde q(t))\in \Omega(R_V,R_q)$,
and for almost all $t$, since $u(t)$ is piecewise affine. Define the
first exit time
$T^\star
:=
\inf\left\{
t\ge 0:
(\widetilde V(t),\widetilde q(t))\notin \Omega(R_V,R_q)
\right\}.$
If no such time exists, then $T^\star=\infty$. On the interval $[0,T^\star)$, the trajectory belongs to
$\Omega(R_V,R_q)$. Therefore, all bounds derived above apply, and
\eqref{eq:Psi_dot_scalar_proof} holds for almost all
$t\in[0,T^\star)$: $\dot\Psi(t)\le -\alpha\Psi(t)+c_\star.$ By the comparison lemma, for all $t\in[0,T^\star), \Psi(t)
\le
e^{-\alpha t}\Psi(0)
+
\frac{c_\star}{\alpha}\bigl(1-e^{-\alpha t}\bigr).$
If $\Psi(0)\le \rho_\Omega$, then using $c_\star<\alpha\rho_\Omega.$
Then, for every $t\in[0,T^\star)$,
\[
\Psi(t)
\le
e^{-\alpha t}\rho_\Omega
+
\frac{c_\star}{\alpha}\bigl(1-e^{-\alpha t}\bigr)
\le
\rho_\Omega .
\]
Thus, $\Psi(t)\le \rho_\Omega, 
\forall t\in[0,T^\star)$.
Next, by definition of $\Psi$,
$\Psi(\widetilde V,\widetilde q)
= \textstyle 
\frac{1}{6} \widetilde V^\top \boldsymbol\tau \widetilde V
+
\frac{\nu}{2}\|\widetilde q\|_2^2.$
Finally, by definition of $m_\Psi$,
$\textstyle \Psi(\widetilde V,\widetilde q)
\ge
\frac{1}{6\tau_{\max}}\|\widetilde V\|_2^2, 
\Psi(\widetilde V,\widetilde q)
\ge
\frac{\nu}{2}\|\widetilde q\|_2^2.
$
Therefore, for all $t\in[0,T^\star)$, $\|\widetilde V(t)\|_2^2
\le
6\tau_{\max}\Psi(t)
\le
6\tau_{\max}\rho_\Omega
\le
R_V^2,$
and
$\|\widetilde q(t)\|_2^2
\le
\frac{2}{\nu}\Psi(t)
\le
\frac{2}{\nu}\rho_\Omega
\le
R_q^2.$
Hence,
$(\widetilde V(t),\widetilde q(t))\in \Omega(R_V,R_q),
\
\forall t\in[0,T^\star).$
Suppose, for contradiction, that $T^\star<\infty$. Since the state trajectory is continuous, taking the limit $t \to T^\star$ gives
$\|\widetilde V(T^\star)\|_2\le R_V, \ 
\|\widetilde q(T^\star)\|_2\le R_q.$
Thus
\[
(\widetilde V(T^\star),\widetilde q(T^\star))
\in \Omega(R_V,R_q),
\]
which contradicts the definition of $T^\star$ as the first exit time
from $\Omega(R_V,R_q)$. Therefore, $T^\star=\infty$. Consequently,
$\Psi(t)\le \rho_\Omega,
\ \forall t\ge 0$,
and hence
$ \|\widetilde V(t)\|_2\le R_V,
\
\|\widetilde q(t)\|_2\le R_q,
\
\forall t\ge 0.
$
Therefore, $\Omega(R_V,R_q)$ is forward invariant for all trajectories
starting in the sublevel set
$\mathcal S_{\rho_\Omega}
:=
\{(\widetilde V,\widetilde q):\Psi(\widetilde V,\widetilde q)\le \rho_\Omega\}.
$ Moreover,
$\mathcal S_{\rho_\Omega}\subseteq \Omega(R_V,R_q)$,
and $\mathcal S_{\rho_\Omega}$ is forward invariant.
This completes the proof.
\end{proof}

We next establish the existence of a steady state within the invariant region. Since the invariance confines trajectories to a compact set, fixed-point arguments can be used to characterize equilibrium behavior. We show that the closed-loop voltage dynamics admit a steady-state operating point consistent with the network power flow and controller structure, and that this equilibrium lies within the invariant region.

\begin{theorem}\label{thm:steady_state_existence}
Assume the hypotheses of Theorem~\ref{thm:centered_invariance} hold, and let $R_V<\Vnom$ be the radius therein. Let $\mathcal{K}_V
:=
\prod_{i=1}^{n}[\Vnom-R_V, \ \Vnom+R_V]
\subset \mathbb R^{n}$. Define the secondary-level voltage control map~\eqref{eq:voltage_reference} component-wise, for all $i=1,\dots,n$, as $T_i(V)
:= \textstyle 
\Vnom
+ 
\frac{1}{(1+\beta_{V_i})}x_{\s,i}(V,Q(V)) - \frac{r_{V_i}(1 + \beta_{Q_i})}{(1+\beta_{V_i})}Q_i(V).$ 
Then,
i) $T$ is continuous on $\mathcal K_V$, and $T(\mathcal K_V)\subseteq \mathcal K_V$, ii) there exists $V^\star\in\mathcal K_V$ such that $T(V^\star)=V^\star$. Let $Q^\star := Q(V^\star),
\ (\overline{Q})^\star:=Q^\star,$ and $u^\star:= x_\s(V^\star,Q^\star)$,
then signals $V(t)\equiv V^\star,\
Q(t)\equiv Q^\star,\
\overline{Q}(t)\equiv Q^\star,\
u(t)\equiv u^\star $ 
form a positive steady state of the voltage/filter/controller subsystem. 
\end{theorem}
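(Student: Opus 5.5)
The argument is Brouwer's fixed-point theorem applied to $T$ on the compact convex box $\mathcal K_V$. Once a fixed point is in hand, the steady state follows by substituting constant signals into the closed-loop equations.

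\textbf{Continuity.} $Q(V)$ is a quadratic polynomial by \eqref{eq:reactivepower}, so it is continuous. The data of problem \eqref{eq:gfm_optimization}, namely $\alpha_i$ and the centers $m_i$ of the intervals $D_i^{V_\Delta}$, are affine in $(V,\overline Q)$. Because of the consensus constraint, the problem reduces to a scalar strongly convex quadratic $\tfrac n2 (c-\bar\alpha)^2+\text{const}$ minimized over $\bigcap_i [m_i-V_\Delta,m_i+V_\Delta]$. Its solution is therefore the projection $c_{x}=\Pi_{[\max_i m_i - V_\Delta,\ \min_i m_i + V_\Delta]}(\bar\alpha)$, which is continuous in $(V,Q)$ as long as the intersection is nonempty. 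The median formula $\min\{\max\{\bar\alpha,\ell\},h\}$ with continuous $\ell,h$ makes this explicit. Nonemptiness on $\mathcal K_V$ is implicitly assumed by the paper through well-posedness of the controller, and I would state it as part of the standing assumptions. Hence $T$ is continuous.

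\textbf{Self-map.} This is the main obstacle. I would evaluate $T_i(V)-\Vnom = \tilde\beta_{V_i}^{-1}\bigl(x_{\s,i}-\tilde\beta_{Q_i} r_{V_i} Q_i(V)\bigr)$. First try to exploit the box constraint: $x_{\s,i}\in D_i^{V_\Delta}$ gives $|x_{\s,i}-\tilde\beta_{Q_i}r_{V_i}Q_i+\beta_{V_i}(\Vnom-V_i)|\le V_\Delta$ (with $\overline Q=Q$ at steady state). Therefore $|T_i(V)-\Vnom|\le \tilde\beta_{V_i}^{-1}\bigl(V_\Delta+|\beta_{V_i}|\,|V_i-\Vnom|\bigr)$, and the reactive term cancels. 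It then suffices that $\tilde\beta_{V_i}^{-1}(V_\Delta+|\beta_{V_i}|R_V)\le R_V$. For $\beta_{V_i}\ge0$ this is $V_\Delta\le R_V$. I would adopt this, or the analogous inequality for general $\beta_{V_i}$, as the compatibility condition between $V_\Delta$ and the radius from Theorem~\ref{thm:centered_invariance}. Failing that, I would fall back on the bounds \eqref{eq:bound_on_xs} and \eqref{eq:LQ_bound_proof} together with the radius conditions of Theorem~\ref{thm:centered_invariance}. Since $R_V<\Vnom$, every point of $\mathcal K_V$ is componentwise positive, which is what makes $Q(V)$ and the hypothesis $V_i>0$ meaningful.

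\textbf{Fixed point and steady state.} Brouwer gives $V^\star\in\mathcal K_V$ with $T(V^\star)=V^\star$, and $V^\star_i\ge \Vnom-R_V>0$. Set $Q^\star=Q(V^\star)$, $\overline Q\equiv Q^\star$, and $u\equiv u^\star=x_\s(V^\star,Q^\star)$. The remaining work is to check each equation with constant signals.

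The filter equation $\tau_{Q_i}\dot{\overline Q}_i=Q_i-\overline Q_i$ vanishes. Every sample re-solves \eqref{eq:gfm_optimization} with the same data, returning $x_\s=u^\star=u(t_\s)$. The interpolation is then constant, so $\dot u=0$. Substituting into \eqref{eq:V_dot_expanded_final} gives
\[
\tau_{Q_i}\dot V_i=-(V^\star_i-\Vnom)-\tilde\beta_i r_{V_i}Q_i^\star+\tilde\beta_{V_i}^{-1}u_i^\star=-(V_i^\star-T_i(V^\star))=0.
\]
Hence the constant signals form an equilibrium, it is positive, and it lies in the invariant region.
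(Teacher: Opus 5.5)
Your argument is correct. It follows the paper's skeleton: Brouwer on the box $\mathcal K_V$, continuity through the scalar projection formula (exactly Lemma~\ref{lem:continuous_sol}), and the equilibrium check via $\tau_{Q_i}\dot V_i=T_i(V^\star)-V_i^\star=0$. Your self-map step, however, is genuinely different.

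The paper bounds the two pieces of $T_i(V)-\Vnom$ separately. It uses $|x_{\s,i}|\le\|x_\s\|_2\le M_u$ from \eqref{eq:bound_on_xs} and $|Q_i(V)|\le M_Q$, and then imposes \eqref{eq:selfmap_condition}. You instead use feasibility, $x_{\s,i}\in D_i^{V_\Delta}$ evaluated at $\overline Q=Q(V)$. This gives the identity $\tilde\beta_{V_i}\bigl(T_i(V)-\Vnom\bigr)=(x_{\s,i}-m_i)+\beta_{V_i}(V_i-\Vnom)$, where $m_i$ is the center of $D_i^{V_\Delta}$, so the reactive-power term cancels exactly.

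Your resulting condition $V_\Delta+|\beta_{V_i}|R_V\le\tilde\beta_{V_i}R_V$ does not involve $n$, $\beta_{Q_i}$, $r_{V_i}$ or the susceptances, and it is implied by the paper's condition. Since $B_1\ge\sqrt n\max_i|\beta_{V_i}|$, one has $M_u\ge n\max_i|\beta_{V_i}|R_V+\sqrt n V_\Delta$, so \eqref{eq:selfmap_condition} forces your inequality. The converse fails: for $\beta_{V_i}=\beta_{Q_i}=0$ the paper's condition can never hold, because $M_u\ge R_V+V_\Delta>R_V$, whereas yours reduces to $V_\Delta\le R_V$.

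What each route buys: the paper's is more generic, needing only a norm bound on the optimizer and not the particular centering of $D_i^{V_\Delta}$. Yours exploits the fact that $D_i^{V_\Delta}$ is centered at precisely the quantity appearing in $T_i$, and so obtains a far less restrictive hypothesis. Both versions need two assumptions absent from the theorem statement:
\begin{itemize}
\item a self-map condition (the paper introduces \eqref{eq:selfmap_condition} only inside its proof);
\item $\bigcap_i[m_i-V_\Delta,m_i+V_\Delta]\neq\emptyset$ on $\mathcal K_V$, which the paper acknowledges only in Lemma~\ref{lem:continuous_sol}. You are right to state this explicitly.
\end{itemize}

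Two minor points. First, your fallback to ``the radius conditions of Theorem~\ref{thm:centered_invariance}'' would not by itself give $T(\mathcal K_V)\subseteq\mathcal K_V$. Those conditions concern $c_\star<\alpha\rho_\Omega$, not the map $T$, so an explicit inequality of the paper's type would still be needed on that route. Second, your closing claim that the equilibrium lies in the invariant region is not established. $\mathcal K_V$ is an $\infty$-norm box, so $\|V^\star-\mathbf{\Vnom}\|_2$ may reach $\sqrt n R_V$, and nothing bounds $\|Q^\star-\Qnom\|_2$ by $R_q$. The theorem does not assert this, so the claim can simply be dropped.
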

\begin{proof}
Define $d_{\max} := \max_i \textstyle  \left(|B_{ii}|+\sum_{k\in N_i}|B_{i\kk}|\right), M_Q := d_{\max}(\Vnom $ $ + R_V)^2, M_u := B_1\sqrt{n} R_{V} + B_2\sqrt{n}M_Q + \sqrt{n}V_\Delta$, where, $B_1,B_2$ are as in~\eqref{eq:constants}
$\tilde\beta_{Q,\max}:=\max_i |\tilde\beta_{Q_i}|, r_{V,\max}:=\max_i |r_{V_i}|$. Let
\begin{equation}\label{eq:selfmap_condition}
\textstyle \frac{M_u+\tilde\beta_{Q,\max}r_{V,\max}M_Q}{\tilde\beta_{V,\min}}
\le R_V.
\end{equation}

We proceed in several steps.

\textbf{Step 1: $\mathcal K_V$ is a nonempty compact convex positive set.}
Since $R_{V}>0$, the set $\mathcal K_V$ is nonempty. Because it is a product of closed bounded intervals, it is compact; it is convex. Since $\Vnom > R_{V} >0$. 
Hence, every $V\in\mathcal K_V$ is componentwise positive:
$V_i\in [\Vnom-R_{V},\Vnom+R_{V}]
\subset (0,\infty),
\ i=1,\dots,n.$

\textbf{Step 2: Map $T$ is continuous.}
By definition~\eqref{eq:reactivepower}, $Q(V)$ is continuous for any $V$ and the solution $x_\s$ of~\eqref{eq:gfm_optimization} is continuous in $(V,Q(V))$ (see Lemma~\ref{lem:continuous_sol}). Therefore, the composition $V \mapsto x_\s\bigl(V,Q(V)\bigr)$ 
is continuous. Since each component $T_i$ is obtained from continuous operations, $T$ is continuous.

\textbf{Step 3: Uniform bounds for $Q(V)$ on $\mathcal K_V$.}
Let $V\in \mathcal K_V$. Since each component satisfies
$|V_i|\le \Vnom+R_{V}$, we have
$\|V\|_\infty\le \Vnom+R_{V}$. 
For each $i$, using the reactive-power relation~\eqref{eq:reactivepower}, we obtain $|Q_i(V)| \le    \left(|B_{ii}|+\sum_{\kk\in N_i}|B_{i\kk}|\right)\|V\|_\infty^2 \le d_{\max}(\Vnom+R_{V})^2 = M_Q.$
\begin{equation}\label{eq:Qi_bound_proof}
\hspace{-1in} \mbox{Hence,} \hspace{0.2in} |Q_i(V)|\le M_Q,
\quad i=1,\dots,n.
\end{equation}
which yields $\|Q(V)\|_2
\le
\sqrt{n}M_Q.$
Using~\eqref{eq:bound_on_xs}, we have $ \|x_\s \|_2
\le B_1\|\tilde{V}\|_2 + B_2\|\overline{Q}\|_2 + \sqrt{n}V_\Delta \le \left(B_1\sqrt{n} R_{V} + B_2\sqrt{n}M_Q + \sqrt{n}V_\Delta \right):= M_u.$


\textbf{Step 4: $T(\mathcal K_V)\subseteq \mathcal K_V$.}
Fix $V\in\mathcal K_V$. For each $i$, using~\eqref{eq:selfmap_condition}, and~\eqref{eq:Qi_bound_proof}, we get
\begin{align*}
|T_i(V)-\Vnom|
&= \textstyle \left|
\frac{x_{\s,i}(V,Q(V))-\tilde\beta_{Q_i}r_{V_i}Q_i(V)}{\tilde\beta_{V_i}}
\right| \le \\
& \hspace{-0.72in} \textstyle \frac{
|x_{\s,i}(V,Q(V))|
+
|\tilde\beta_{Q_i}|
|r_{V_i}|
|Q_i(V)|
}{\tilde\beta_{V_i}} \le
\frac{M_u+\tilde\beta_{Q,\max}r_{V,\max}M_Q}{\tilde\beta_{V,\min}}
\le
R_{V}
\end{align*}
where the last inequality is exactly~\eqref{eq:selfmap_condition}. Therefore
$\Vnom-R_{V}\le T_i(V)\le \Vnom+R_{V},
\ i=1,\dots,n,
$
which shows that $T(V)\in \mathcal K_V$. Since $V\in\mathcal K_V$ is arbitrary, we conclude
$T(\mathcal K_V)\subseteq \mathcal K_V.$

\textbf{Step 5: Existence of a fixed point.}
The set $\mathcal K_V$ is nonempty, compact, and convex, and since
$T:\mathcal K_V\to\mathcal K_V$ is continuous, Brouwer's fixed-point theorem~\cite{brouwer1911abbildung} implies that there exists $V^\star\in\mathcal K_V$ such that $ T(V^\star)=V^\star$.

\textbf{Step 6: Construction of the steady state.}
Define $Q^\star:=Q(V^\star), \ 
(\overline{Q})^\star:=Q^\star, \ 
u^\star:=x_\s(V^\star,Q^\star)$.
We claim that the constant signals
$V(t) \equiv V^\star, \
Q(t)\equiv Q^\star, \ \overline{Q}(t) \equiv Q^\star, \ u(t)\equiv u^\star,
$
solve the voltage/filter/controller subsystem. First, since $u(t) = u^\star = x_\s(V^\star, Q^\star)  = c^\star \mathbf 1_{\mathrm{gfm}}$, for some constant $c^\star$, for all $t$, we have
$\dot u_i(t)=0, i=1,\dots,n.$ Second, since $(\overline{Q})^\star=Q^\star$, the averaging filter
$\tau_{Q_i}\dot{(\overline{Q})}^\star = Q^\star_i - (\overline{Q})^\star$ 
yields
$ \dot{(\overline{Q})}_i^\star(t)=0, i=1,\dots,n.$ It remains to be verified that the voltage equation holds. The fixed-point identity $T(V^\star)=V^\star$ means that, for each $i$,
$V_i^\star
= \Vnom +
\frac{u_i^\star - \tilde\beta_{Q_i}r_{V_i}Q_i^\star}{\tilde\beta_{V_i}}.$
Equivalently, $u_i^\star
=
\tilde\beta_{V_i}(V_i^\star - \Vnom)+\tilde\beta_{Q_i}r_{V_i}Q_i^\star.$
Using the relation $\tilde\beta_i=\tilde\beta_{Q_i}/\tilde\beta_{V_i}$, we may rewrite this as
$0
= -\frac{1}{\tau_{Q_i}}(V_i^\star-\Vnom)
-\frac{\tilde\beta_i r_{V_i}}{\tau_{Q_i}}Q_i^\star
+\frac{1}{\tau_{Q_i}\tilde\beta_{V_i}}u_i^\star.$
Since $\dot u_i=0$, it implies $\dot{V}_i = 0, i=1,\dots,n$ for all $i$. Therefore, the voltage equation~\eqref{eq:V_dot_expanded_final} achieves a steady-state.

\textbf{Step 7: Positivity.}
Since $V^\star\in \mathcal K_V$, $V_i^\star\ge \Vnom-R_{V}>0, i=1,\dots,n.$ Thus, the steady state voltage is positive.

Hence, the constructed constant signals form a positive steady state of the voltage/filter/controller subsystem. 
\end{proof}

Having established steady-state existence, we characterize its relation to the secondary-level objectives. We show that the equilibrium induced by the proposed distributed controller coordinates GFM-DERs to achieve voltage regulation and equal per-unitized reactive power sharing.

\begin{theorem}\label{thm:steady_state}
Voltage dynamics~\eqref{eq:V_dyan} under the secondary-level control~\eqref{eq:U_i_design_law} achieve equal reactive power sharing and voltage regulation among GFM-DERs at steady state.
\end{theorem}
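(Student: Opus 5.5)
Working at the steady state given by Theorem~\ref{thm:steady_state_existence}, I will read off the relations that hold there and then use the structure of the optimization problem~\eqref{eq:gfm_optimization}, chiefly its consensus constraint, to extract both objectives. At steady state, $\dot u_i=0$ and $\overline{Q}_i=Q_i^\star$. Also $u^\star=x_\s(V^\star,Q^\star)=c^\star\mathbf 1_n$ for a scalar $c^\star$, which follows from the consensus constraint $x_{\s,i}=x_{\s,j}$. Setting $\dot V_i=0$ in~\eqref{eq:V_dot_expanded_final} gives the identity already derived in Step~6 of the previous proof:
\begin{equation*}
c^\star=\tilde\beta_{V_i}(V_i^\star-\Vnom)+\tilde\beta_{Q_i}r_{V_i}Q_i^\star,\quad i=1,\dots,n.
\end{equation*}
This is one linear relation per DER between the voltage deviation and the per-unitized reactive power $r_{V_i}Q_i^\star$, with the same right-hand constant for every DER.

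The second step is to show that the optimizer also enforces the same combination from the other side. With $\overline{Q}=Q^\star$ and $V=V^\star$, the target is $\alpha_i^\star=\tilde\beta_{Q_i}(\Vnom-V_i^\star)-\beta_{V_i}r_{V_i}Q_i^\star$. The center of the box $D_i^{V_\Delta}$ is $m_i=\tilde\beta_{Q_i}r_{V_i}Q_i^\star-\beta_{V_i}(\Vnom-V_i^\star)$. I will combine the fixed-point identity with these expressions. Subtracting $\alpha_i^\star$ from $m_i$, and using $c^\star$ from the identity, should express $c^\star$ as an affine function of $(\Vnom-V_i^\star)$ and $r_{V_i}Q_i^\star$.

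From there I will split into two cases using the optimality conditions of~\eqref{eq:gfm_optimization}. The problem has a strictly convex scalar objective $\sum_i\tfrac12(c-\alpha_i)^2$ over the interval $\bigcap_i D_i^{V_\Delta}$, assumed nonempty. In the interior case, $c^\star=\bar\alpha^\star$. In the active case, $c^\star$ is the projection of $\bar\alpha^\star$ onto the intersection interval.

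\emph{Interior case.} Substituting $c^\star=\bar\alpha^\star$ into the per-DER identities gives $n$ equations. Eliminating $c^\star$ should make the pair $(V_i^\star-\Vnom,\,r_{V_i}Q_i^\star)$ satisfy two independent linear relations. One comes from the droop/compensation identity, with coefficients $(\tilde\beta_{V_i},\tilde\beta_{Q_i})$. The other comes from the $\alpha$ and $m$ structure, with coefficients $(\tilde\beta_{Q_i},\beta_{V_i})$ swapped. If the design ensures these are independent, for example $\tilde\beta_{V_i}\beta_{V_i}\neq\tilde\beta_{Q_i}^2$, and the hyper-parameters are chosen uniformly across DERs, the solution forces $V_i^\star-\Vnom$ to be a common value. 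Summing over $i$ and using that $c^\star$ equals the average of the $\alpha_i^\star$, I expect $\sum_i(\Vnom-V_i^\star)$ and hence each deviation to vanish. That yields $V_i^\star=\Vnom$ (voltage regulation) and $r_{V_i}Q_i^\star=c^\star/\tilde\beta_{Q_i}$ equal across $i$ (equal per-unitized sharing). If the betas are heterogeneous, I would state the result as equality of $\tilde\beta_{Q_i}r_{V_i}Q_i^\star$, the per-unitized quantity the controller actually equalizes.

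\emph{Active case and main obstacle.} The main obstacle is making this elimination rigorous. The steady-state equations are coupled through the common scalar $c^\star$ and through $Q(V^\star)$, and the $V_\Delta$ box constraint may be active. Under an active constraint only the approximate statement $|c^\star-m_i|\le V_\Delta$ is available, giving regulation and sharing up to an $O(V_\Delta)$ error. I would therefore first handle the inactive case cleanly, then bound the deviation in the active case by $V_\Delta$. The paper likely interprets ``achieve'' as exact when the constraints are not binding.
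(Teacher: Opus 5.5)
Your proposal has a genuine gap: the interior-case elimination does not work, and the conclusion it aims for is generally false.

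\textbf{The elimination step fails.} Interior optimality of the scalar problem gives only $c^\star=\bar\alpha^\star=\frac1n\sum_j\alpha_j^\star$. That is one averaged equation, not $c^\star=\alpha_i^\star$ for each $i$. So there is no second per-DER linear relation with ``swapped'' coefficients, and nothing forces $V_i^\star-\Vnom$ to be common across DERs, let alone zero. With uniform gains, averaging your fixed-point identities and equating with $\bar\alpha^\star$ yields only $(\tilde\beta_V+\tilde\beta_Q)\bar v+(\tilde\beta_V+\tilde\beta_Q-1)\bar q=0$. Here $\bar v,\bar q$ are the averages of $V_i^\star-\Vnom$ and $r_{V_i}Q_i^\star$.

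\textbf{The target conclusion is false in general.} Suppose $V^\star=\Vnom\mathbf 1_n$. Then $Q^\star=\Qnom$, and your identity forces $\tilde\beta_Q r_{V_i}\Qnom_i=c^\star$ for all $i$. That means equal per-unitized reactive power at nominal voltage, which fails under heterogeneous line impedances, exactly the situation motivating the controller. Exact regulation and exact sharing generally cannot hold simultaneously for one parameter choice. The paper does not claim they do, and it does not read ``achieve'' as exact when constraints are inactive.

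\textbf{The missing identity.} Since $m_i=\tilde\beta_{Q_i}r_{V_i}Q_i^\star+\beta_{V_i}(V_i^\star-\Vnom)$, your fixed-point relation reads exactly $V_i^\star-\Vnom=c^\star-m_i$. The feasibility constraint $c^\star\in D_i^{V_\Delta}$ holds whether or not it is active. It therefore gives $|V_i^\star-\Vnom|\le V_\Delta$ in every case, with an extra ADMM-accuracy term in the paper. So the bound you relegated to the ``active case'' is the core of the argument.

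\textbf{How the paper handles the trade-off.} It uses two separate design regimes in $\tilde\beta_{V_i}(V_i^\star-\Vnom)+\tilde\beta_{Q_i}r_{V_i}Q_i^\star=u_i^\star$, with $u^\star$ (nearly) in consensus.
\begin{itemize}
\item With $\beta_V=-1+\mu$ and $\beta_Q=0$, we get $\tilde\beta_V=\mu\approx 0$. Then $r_{V_i}Q_i^\star\approx u_i^\star$ is common across DERs, which gives sharing.
\item With $\beta_V=0$ and $\beta_Q=-1+\mu$, we get $V_i^\star-\Vnom\approx u_i^\star$. Both $\alpha_i\approx 0$ and $m_i\approx 0$, so the optimizer returns $u^\star\approx 0$ and $V_i^\star\approx\Vnom$, which gives regulation.
\end{itemize}

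\textbf{A cleaner version of your active-case idea.} Subtracting the identities for $i$ and $j$ with uniform gains gives $|\tilde\beta_Q|\,|r_{V_i}Q_i^\star-r_{V_j}Q_j^\star|\le 2|\tilde\beta_V|V_\Delta$. This makes the regulation/sharing trade-off explicit, which is what the paper's two design regimes exploit.
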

\begin{proof}
By Theorem~\ref{thm:steady_state_existence}, the closed-loop voltage dynamics admit a steady-state operating point. At this equilibrium, the measurements entering~\eqref{eq:gfm_optimization} are constant, so $\alpha_i(t_\s)$ and $D_i^{V_\Delta}(t_\s)$ are fixed at all steady-state sampling instants. Hence, problem~\eqref{eq:gfm_optimization} has the same optimizer, denoted $x^\sst$, at every such instant. By the sampled-data update law, after one update beyond steady state, $u_i(t_\s)=x_i^\sst$ for all sufficiently large $t_\s$ and all $i\in{1,\dots,n}$. Consequently, $x_i^\sst-u_i(t_\s)=0$, and the interpolation law gives $\dot u_i(t)=0$ for all sufficiently large $t$. Since $U_i^\star$ differs from $u_i$ only by steady-state constant terms, $\dot U_i^\star=0$ for all $i\in{1,\dots,n}$ at steady state. Setting $\dot{V}_i = 0$ in~\eqref{eq:V_dyan} and using  $\dot U_i^\star = 0, \overline{Q}_i^{\sst} = Q_i^{\sst}$ gives
\begin{align}\label{eq:Vi_steady_state}
\hspace{-0.08in} (V_i^{\sst} - \Vnom) =  u_i^{\sst} + \beta_{V_i} (\Vnom - V_i^{\sst}) - \tilde{\beta}_{Q_i} r_{V_i}Q_i^{\sst}.
\end{align}
Let $x^\sst$ be $\admm$ solution with accuracy $\varepsilon/4$. Then, $|V_i^{\sst} - \Vnom| = \big|u_i^{\sst} + \beta_{V_i} (\Vnom - V_i^{\sst})- \tilde{\beta}_{Q_i} r_{V_i}Q_i^{\sst}\big| \leq |x^\sst + \beta_{V_i} (\Vnom - V_i^{\sst}) - \tilde{\beta}_{Q_i} r_{V_i}Q_i^{\sst}| + |u_i^{\sst} - x^\sst| \leq V_\Delta + \textstyle \sqrt{\Upsilon (0.75)^{\theta_{\varepsilon}}} \leq V_\Delta + \varepsilon/2$,
where, we used~\eqref{eq:admm_iter_comp}. From~\eqref{eq:Vi_steady_state},
\begin{align}\label{eq:steady_state}
(1 + \beta_{V_i})(V_i^{\sst} - \Vnom) + (1 + \beta_{Q_i}) r_{V_i}Q_i^{\sst} = u_i^{\sst}, \ \forall i. 
\end{align}
Since the steady-state inputs $u_i^\sst$ are obtained from the $\admm$ solution $x_i^\sst$~\cite{ADMM_tac}, the agreement error satisfies
$|u_i^{\sst} - u_\jj^{\sst}| \leq \textstyle \sqrt{\Upsilon (0.75)^{\theta_{\varepsilon}}} \leq \varepsilon$ for all $i,\jj$. Thus, with suitable design parameters, the secondary-level inputs~\eqref{eq:U_i_design_law} achieve the objectives of tasks $\mathcal{T}_1$ and $\mathcal{T}_2$ at steady state: 
\begin{itemize}
    \item [(i)] Let $\beta_{V_i} = \beta_{V} = - 1 + \mu$, $0< \mu \ll 0.01$, $\beta_{Q_i} = \beta_{Q} = 0$, for all GFM-DERs. Then from~\eqref{eq:steady_state},  $|r_{V_i} Q_i^{\sst} - r_{V_\jj} Q_\jj^{\sst}| \approx |u_i^{\sst} - u_\jj^{\sst}| \leq \varepsilon \ \forall i,\jj$. Hence, equal per-unitized reactive power sharing is achieved among all GFM-DERs. 
    \item [(ii)] Let $\beta_{V_i} = \beta_{V} = 0, \beta_{Q_i} = \beta_{Q} = - 1 + \mu$, $0< \mu \ll 0.01$, for all GFM-DERs $i$. Then from~\eqref{eq:steady_state}, $|V_i^{\sst}  - V_\jj^{\sst}| = | (\Vnom + u_i^{\sst}) - (\Vnom + u_\jj^{\sst}) | \approx |u_i^{\sst} - u_\jj^{\sst}| \leq \varepsilon$ for all GFM-DERs $i,\jj$. Thus, the GFM-DERs buses have similar voltage magnitudes in steady state. Moreover, if $\beta_{V_i}=\beta_V=0$ and $\beta_{Q_i}=\beta_Q=-1+\mu$, with $0<\mu\ll0.01$, for all GFM-DERs $i$, then the solution $x_i^\sst$ of~\eqref{eq:gfm_optimization}, and hence $u_i^\sst$, satisfies $u_i^\sst\approx0$. Therefore, the steady-state voltage magnitude satisfies $V_i=V_{\mathrm{nom}}+u_i^\sst\approx\Vnom$.
\end{itemize}
Therefore, with appropriate design choices, the secondary-level controller achieves equal reactive power sharing
and voltage regulation among GFM-DERs at steady state.
\end{proof}
\subsection{Analysis of GFM-DER Frequency Control Loop}
Next, we establish the stability properties of frequency dynamics~\eqref{eq:freq_dyan}. Let $\delta=[\delta_1,\dots,\delta_{n}]^\top\in\mathbb{R}^{n}, \omega=[\omega_1,\dots,\omega_{n}]^\top\in\mathbb{R}^{n}$ be the system states. Dynamics~\eqref{eq:freq_dyan},  using~\eqref{eq:activepower} and~\eqref{eq:gfl_power}, can be written compactly as
\begin{align}\label{eq:phase_omega_closedloop_vector}
    \dot{\begin{bmatrix} \delta \\ \omega \end{bmatrix}} &= \mathbf{A} \begin{bmatrix} \delta \\ \omega \end{bmatrix} + \mathbf{B} \begin{bmatrix} 0 \\ \Omega \end{bmatrix}  + \mathbf{C} \begin{bmatrix} 0 \\ \boldsymbol{p} \end{bmatrix} + \mathbf{D} \begin{bmatrix} 0 \\ \widehat{V} \end{bmatrix},
\end{align}
where, $\Omega = [\Omega_1, \dots, \Omega_{n}]^\top \in \mathbb{R}^{n}$ with $\Omega_i :=   (\overline{\omega} + r_{\omega_i} P_i^{\min})/\tau_{P_i}$, $\boldsymbol{p} = [p_1, \dots, p_{n}]^\top \in \mathbb{R}^{n}$, $\widehat{V} = [V^2_1, \dots, V^2_{n}]^\top \in \mathbb{R}^{n}$ and matrices $\mathbf{A},\mathbf{B}, \mathbf{C}, \mathbf{D}$ are
\begin{align}
    \mathbf{A} &:= \begin{bmatrix}
     \hspace{-0.05in} 0_{n}  \ \hspace{0.1in}  \mathbb{I}_{n} \\
     \mathbf{G}_{\omega} \ \hspace{0.05in} \boldsymbol{\tau}_P 
    \end{bmatrix}
    \mathbf{B}:= \begin{bmatrix}
     0_{n} \ \ 0_{n} \\
     0_{n}  \ \ \mathbb{I}_{n} 
    \end{bmatrix}
    \mathbf{C} := \begin{bmatrix}
     0_{n} \ \ 0_{n} \\
     0_{n} \  \ \boldsymbol{P}
    \end{bmatrix} \nonumber \\
    \mathbf{D} &:= \begin{bmatrix}
     0_{n} & 0_{n} \\
     0_{n} & \text{diag}((-G_{i i} r_{\omega_i})/\tau_{P_i})
    \end{bmatrix}, \label{eq:ABC_matrix_omega_delta}
\end{align}  
where, $\boldsymbol{\tau}_{P} := \text{diag}\left(-1/\tau_{P_i}\right), [\mathbf{G}_\omega]_{i \jj} := -\frac{r_{\omega_i}B_{i \jj} V_{i} V_{\jj}}{\tau_{P_i}}, $ $[\mathbf{G}_\omega]_{ii} := -\sum_{\jj} [\mathbf{G}_\omega]_{i \jj}, \boldsymbol{P} := \text{diag}(r_{\omega_i}(P_i^{\max} - P_i^{\min})/\tau_{P_i})$. Note that matrix $\mathbf{G}_\omega$ has a marginal mode associated with the vector $\mathbf{1}_{n}$. Since the phase angles are invariant under uniform shifts, we study the stability of~\eqref{eq:phase_omega_closedloop_vector} in relative coordinates. Let $\xi\in\mathbb{R}^{n}_{>0}$ denote a normalized left null vector of $\mathbf{G}_\omega$,  i.e., $\xi^\top \mathbf{G}_\omega=0$ and $\xi^\top\mathbf{1}_{n}=1$, and define the projection matrix
$\Pi:=\mathbb{I}_{n} -\mathbf{1}_{n}\xi^\top$. The projected phase and frequency variables are given by $ \tilde{\delta}:=\Pi\delta, \ 
\tilde{\omega}:=\Pi\omega. $
Because $\mathbf{G}_\omega\mathbf{1}_{n}=0$, we have $\mathbf{G}_\omega\delta=\mathbf{G}_\omega\tilde{\delta}$. Applying the projection to~\eqref{eq:phase_omega_closedloop_vector} yields
\[
\begin{aligned}
\dot{\tilde{\delta}} &= \tilde{\omega},\\
\dot{\tilde{\omega}}
&= \mathbf{G}_\omega\tilde{\delta}
+ \Pi\boldsymbol{\tau}_P \tilde{\omega}
+ \Pi\boldsymbol{\tau}_P\mathbf{1}_{n}\omega_\xi
+ \Pi\Omega
+ \Pi\boldsymbol{P}\boldsymbol{p}\\
& \hspace{0.15in} + \Pi\operatorname{diag}\left(-G_{ii}r_{\omega_i}/\tau_{P_i}\right)\widehat{V},
\end{aligned}
\]
where $\omega_\xi:=\xi^\top\omega$ denotes the weighted average frequency component. In particular, if the active-power filter time constants are identical, i.e., $\tau_{P_i}=\tau_P$ for all $i$, then $\boldsymbol{\tau}_P=-(1/\tau_P)\mathbb{I}_{n}$ and the term $\Pi\boldsymbol{\tau}_P\mathbf{1}_{n}\omega_\xi$ vanishes. In this case, the projected dynamics close in $(\tilde{\delta},\tilde{\omega})$ as
\begin{equation} \label{eq:centered_freq_delta}
\begin{aligned}
\hspace{-0.11in}\dot{\tilde{\delta}} &= \tilde{\omega}, \\
\hspace{-0.11in}\dot{\tilde{\omega}}
&= \textstyle \mathbf{G}_\omega\tilde{\delta}
-\frac{1}{\tau_P}\tilde{\omega}
+ \Pi\left[\Omega
+ \boldsymbol{P}\boldsymbol{p}+\operatorname{diag}\left(\frac{-G_{ii}r_{\omega_i}}{\tau_{P_i}}\right)\widehat{V}\right].
\end{aligned}
\end{equation}

Thus, the marginal absolute-angle direction is removed, and the stability analysis can be carried out on the disagreement subspace. The projected dynamics~\eqref{eq:centered_freq_delta}, with $\tilde{x}:= [\tilde{\delta}\ \tilde{\omega}]^\top$, can be expressed in a compact form as
\begin{equation}\label{eq:projected_compact}
  \begin{aligned}
    \dot{\tilde{x}} &= \mathbf{A}_\Pi \tilde{x} + \mathbf{B}_\Pi d(t), \\
    \mathbf{A}_\Pi &:=
\begin{bmatrix}
0_{n} & \mathbb{I}_{n} \\
\mathbf{G}_\omega & -\frac{1}{\tau_P}\mathbb{I}_{n}
\end{bmatrix},
\
\mathbf{B}_\Pi
:=
\begin{bmatrix}
0_{n}\\
\Pi
\end{bmatrix}, \\
d(t)
&:= \textstyle 
\Omega
+
\boldsymbol{P}\boldsymbol{p}(t)
+
\operatorname{diag}\left(\frac{-G_{ii}r_{\omega_i}}{\tau_P}\right)\widehat{V}(t).
\end{aligned}  
\end{equation}
We have the following result.

\begin{theorem}\label{thm:isps_projected_frequency}
Consider dynamics in~\eqref{eq:projected_compact}. Define the disagreement subspace as $\mathcal{S}
:=
\left\{
\tilde{x}\in\mathbb{R}^{2n}
|
\xi^\top\tilde{\delta}=0,
\xi^\top\tilde{\omega}=0
\right\},
$
where $\xi$ is a normalized left null vector of $\mathbf{G}_\omega$. Suppose the conditions of Lemma~\ref{lem:A_pi_hurwitz} hold, so that $\mathbf{A}_\Pi$ restricted to $\mathcal{S}$ is Hurwitz. The projected frequency dynamics are input-to-state stable~\cite{jiang1994small} on $\mathcal{S}$ with respect to the effective input $\Pi d(t)$. In particular, there exist constants $\kappa \geq 1$, $\lambda>0$, and $\gamma>0$ such that
$
\|\tilde{x}(t)\|
\leq
\kappa e^{-\lambda t}\|\tilde{x}(0)\|
+
\gamma
\sup_{0\leq s\leq t}\| \Pi d(s)\|,
\ t\geq 0.
$
Equivalently, if projected input satisfies $\sup_{t\geq 0}\|\Pi d(t)\|\leq \Delta_d$,
then
$
\|\tilde{x}(t)\|
\leq
\kappa e^{-\lambda t}\|\tilde{x}(0)\|
+ \gamma \Delta_d, \ t\geq 0.
$
Therefore, the projected phase-frequency dynamics remain ultimately bounded, with the ultimate bound proportional to the size of the projected control input.
\end{theorem}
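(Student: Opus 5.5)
The plan is the standard linear ISS argument. The projected system \eqref{eq:projected_compact} is a linear time-invariant system driven by an exogenous input. The subspace $\mathcal S$ is invariant, and $\mathbf A_\Pi$ restricted to it is Hurwitz by Lemma~\ref{lem:A_pi_hurwitz}. The bound then follows from the variation-of-constants formula and an exponential bound on the restricted semigroup.

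\textbf{Step 1 (invariance of $\mathcal S$).} Since $\tilde\delta=\Pi\delta$ and $\tilde\omega=\Pi\omega$, we have $\xi^\top\tilde\delta=\xi^\top\tilde\omega=0$ because $\xi^\top\Pi=\xi^\top-(\xi^\top\mathbf 1_n)\xi^\top=0$. Hence trajectories start and remain in $\mathcal S$. To check consistency with the dynamics, compute $\xi^\top\dot{\tilde\delta}=\xi^\top\tilde\omega=0$. Likewise, $\xi^\top\dot{\tilde\omega}=\xi^\top\mathbf G_\omega\tilde\delta-\tfrac1{\tau_P}\xi^\top\tilde\omega+\xi^\top\Pi d=0$, using $\xi^\top\mathbf G_\omega=0$ and $\xi^\top\Pi=0$. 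So $\mathcal S$ is invariant under $\mathbf A_\Pi$, and the input direction $\mathbf B_\Pi d=[0;\Pi d]$ lies in $\mathcal S$. Note that $\mathbf B_\Pi d$ depends only on $\Pi d$.

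\textbf{Step 2 (exponential bound).} Let $\mathbf A_{\mathcal S}$ denote the restriction of $\mathbf A_\Pi$ to $\mathcal S$, which is Hurwitz. Pick any $\lambda$ with $0<\lambda<-\max\operatorname{Re}\,\mathrm{spec}(\mathbf A_{\mathcal S})$. Then there is $\kappa\ge1$ with $\|e^{\mathbf A_\Pi t}z\|\le\kappa e^{-\lambda t}\|z\|$ for all $z\in\mathcal S$. This follows from the Jordan form, or more concretely from a Lyapunov equation. Choose a basis matrix $W$ of $\mathcal S$, solve $M^\top P+PM=-I$ for the reduced matrix $M$, and take $\kappa=\sqrt{\lambda_{\max}(P)/\lambda_{\min}(P)}$ times the conditioning factor of $W$, with $\lambda=1/(2\lambda_{\max}(P))$.

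\textbf{Step 3 (ISS estimate).} The inputs $\boldsymbol p(t)$ and $\widehat V(t)$ are piecewise affine in time and, respectively, continuous and bounded, the latter by Theorem~\ref{thm:centered_invariance}. So $d$ is locally integrable and the solution is given by the variation-of-constants formula
\[
\tilde x(t)=e^{\mathbf A_\Pi t}\tilde x(0)+\int_0^t e^{\mathbf A_\Pi (t-s)}\mathbf B_\Pi d(s)\,ds .
\]
Both $\tilde x(0)$ and $\mathbf B_\Pi d(s)$ lie in $\mathcal S$, so Step 2 applies to each term. Using $\|\mathbf B_\Pi d(s)\|=\|\Pi d(s)\|$, this gives
\[
\|\tilde x(t)\|\le\kappa e^{-\lambda t}\|\tilde x(0)\|+\kappa\int_0^t e^{-\lambda(t-s)}ds\,\sup_{s\le t}\|\Pi d(s)\|,
\]
that is, $\gamma=\kappa/\lambda$. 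The uniform-bound corollary follows by substituting $\Delta_d$, and taking $\limsup_{t\to\infty}$ yields the ultimate bound $\gamma\Delta_d$.

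\textbf{Main obstacle.} One subtlety is that $\mathbf G_\omega$ depends on $V$, which is time-varying, so \eqref{eq:projected_compact} is really time-varying. I would first treat $V$ as frozen at the steady state $V^\star$ of Theorem~\ref{thm:steady_state_existence}, which is the regime in which Lemma~\ref{lem:A_pi_hurwitz} is stated. If time variation must be handled, I would try a common quadratic Lyapunov function for $\mathbf A_\Pi(V)$ over $V\in\mathcal K_V$. This may be feasible because $\mathbf G_\omega$ is a weighted Laplacian, symmetrizable by $\mathrm{diag}(\tau_P/r_{\omega_i})$, with weights bounded above and below on $\mathcal K_V$. The energy function $\tfrac12\tilde\omega^\top D\tilde\omega-\tfrac12\tilde\delta^\top D\mathbf G_\omega\tilde\delta$ plus a small cross term would then give the constants. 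The residual term from the derivative of $\mathbf G_\omega$ is the part I expect to be delicate.
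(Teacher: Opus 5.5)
Your proof is correct, but it reaches the ISS bound by a different route than the paper. The paper works with the forced system throughout. It takes $W(\tilde x)=\tilde x^\top\mathbf M\tilde x$ with $\mathbf A_\Pi^\top\mathbf M+\mathbf M\mathbf A_\Pi=-\mathbf Q$ on $\mathcal S$, absorbs the cross term $2\tilde x^\top\mathbf M\mathbf B_\Pi d$ by Young's inequality to get the dissipation inequality $\dot W\le -c_1W+c_2\|\Pi d\|^2$, and then applies the comparison lemma and the quadratic bounds on $W$. You use the Lyapunov equation only for the unforced flow, to get $\|e^{\mathbf A_\Pi t}z\|\le\kappa e^{-\lambda t}\|z\|$ for $z\in\mathcal S$. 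You then bound the variation-of-constants integral directly, using that $\tilde x(0)$ and $\mathbf B_\Pi d(s)$ both lie in the $\mathbf A_\Pi$-invariant subspace $\mathcal S$.

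Your version is more elementary. It yields $\gamma=\kappa/\lambda$ with no Young parameter and no square-root splitting. Your reduced-matrix construction via a basis of $\mathcal S$ also makes precise the paper's informal ``Lyapunov equation on $\mathcal S$''. This matters because $\mathcal S$ is $\mathbf A_\Pi$-invariant but in general not $\mathbf A_\Pi^\top$-invariant when the $r_{\omega_i}$ differ.

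The paper's dissipation form, in turn, is the one that extends beyond a constant $\mathbf A_\Pi$. Once $\mathbf G_\omega$ depends on $V(t)$, you would need a uniform exponential bound on the state-transition matrix. That bound is usually obtained from a common quadratic Lyapunov function anyway, and such a function plugs directly into the paper's Young-plus-comparison argument. This is essentially what your ``main obstacle'' sketch proposes.

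On that obstacle, the paper's proof also treats $\mathbf A_\Pi$ as a fixed matrix: Lemma~\ref{lem:A_pi_hurwitz} is stated for a given $V>0$. So your Steps 1--3 already prove the statement as posed. The time variation is a real modeling issue, but the paper does not resolve it either.
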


\begin{proof}

First note that $\mathcal{S}$ is invariant under~\eqref{eq:projected_compact}. Indeed, if $\tilde{x}\in\mathcal{S}$, then $\xi^\top\dot{\tilde{\delta}}
=
\xi^\top\tilde{\omega}=0,
$
and using $\xi^\top\mathbf{G}_\omega=0$ and $\xi^\top\Pi=0$,
$\xi^\top\dot{\tilde{\omega}}
=
\xi^\top\mathbf{G}_\omega\tilde{\delta}
-
\frac{1}{\tau_P}\xi^\top\tilde{\omega}
+
\xi^\top\Pi d(t)
=0.$
Hence trajectories initialized in $\mathcal{S}$ remain in $\mathcal{S}$. Note that under the assumptions in Lemma~\ref{lem:A_pi_hurwitz},  $\mathbf{A}_\Pi$ is Hurwitz, and thus for any symmetric positive definite matrix $\mathbf{Q}$ on $\mathcal{S}$, there exists a symmetric positive definite matrix $\mathbf{M}$ on $\mathcal{S}$ such that $\mathbf{A}_\Pi^\top\mathbf{M}+\mathbf{M}\mathbf{A}_\Pi=-\mathbf{Q}$ on $\mathcal{S}$. Consider the Lyapunov function
$
W(\tilde{x})=\tilde{x}^\top \mathbf{M}\tilde{x}.
$
Along trajectories of the projected system,
$
\dot{W}
= \tilde{x}^\top
\left(
\mathbf{A}_\Pi^\top\mathbf{M}
+
\mathbf{M}\mathbf{A}_\Pi
\right)
\tilde{x}
+
2\tilde{x}^\top \mathbf{M}\mathbf{B}_\Pi d(t) =
-\tilde{x}^\top\mathbf{Q}\tilde{x}
+
2\tilde{x}^\top \mathbf{M}\mathbf{B}_\Pi d(t).
$
Applying Young's inequality, with $0 < \eta < 1,$ we get $ 2\tilde{x}^\top \mathbf{M}\mathbf{B}_\Pi d(t)
\leq \eta\lambda_{\min}(\mathbf{Q})\|\tilde{x}\|^2
+
\frac{\|\mathbf{M}\|^2}{\eta\lambda_{\min}(\mathbf{Q})}
\|\Pi d(t)\|^2.$
Thus, $\dot{W}
\leq
-(1-\eta)\lambda_{\min}(\mathbf{Q})\|\tilde{x}\|^2
+
\frac{\|\mathbf{M}\|^2}{\eta\lambda_{\min}(\mathbf{Q})}
\|\Pi d(t)\|^2.$
Since, for all $\tilde{x} \in \mathcal{S}$ we have, 
$\lambda_{\min}(\mathbf{M})\|\tilde{x}\|^2
\leq
W(\tilde{x})
\leq
\lambda_{\max}(\mathbf{M})\|\tilde{x}\|^2,$
there exist constants $c_1>0$ and $c_2>0$ such that
$\dot{W}
\leq
-c_1 W
+
c_2\|\Pi d(t)\|^2.$
Applying the comparison lemma gives $W(t)
\leq
e^{-c_1 t}W(0)
+
\frac{c_2}{c_1}
\sup_{0\leq s\leq t}\|\Pi d(s)\|^2.$
Using the quadratic bounds on $W$ yields
$\|\tilde{x}(t)\| \leq \textstyle 
 \sqrt{\frac{\lambda_{\max}(\mathbf{M})}{\lambda_{\min}(\mathbf{M})}}
e^{\frac{-c_1}{2}t}\|\tilde{x}(0)\| +
\sqrt{\frac{c_2}{c_1\lambda_{\min}(\mathbf{M})}}
\sup_{0\leq s\leq t}\|\Pi d(s)\|$. Thus, $\kappa := \sqrt{\frac{\lambda_{\max}(\mathbf{M})}{\lambda_{\min}(\mathbf{M})}}, \lambda = \frac{c_1}{c_2}, \gamma := \sqrt{\frac{c_2}{c_1\lambda_{\min}(\mathbf{M})}}$. Hence, projected dynamics are input-to-state stable with respect to the projected input $\Pi d$.
\end{proof}

\begin{theorem}\label{thm:isps_full_frequency}
Consider the phase-frequency dynamics whose projected form is given by~\eqref{eq:projected_compact}. Suppose the conditions of Lemma~\ref{lem:A_pi_hurwitz} hold and $\mathcal{S}$ is the subspace as defined in Theorem~\ref{thm:isps_projected_frequency}. Define the weighted average-frequency error $e_\xi(t):=\xi^\top\omega(t)-\omega_{\rm nom}$. Then $e_\xi(t)$ satisfies $\dot{e}_\xi(t)
= -\frac{1}{\tau_P}e_\xi(t)
+
d_\xi(t),$
where $d_\xi(t) := \xi^\top d(t)-\frac{1}{\tau_P}\omega_{\rm nom}$.
Hence, $ |e_\xi(t)|
\leq
e^{-t/\tau_P}|e_\xi(0)|
+
\tau_P
\sup_{0\leq s\leq t}|d_\xi(s)|$.
Finally, the original frequency vector satisfies the decomposition
$\omega(t)-\omega_{\rm nom}\mathbf{1}_n = \tilde{\omega}(t)+\mathbf{1}_n e_\xi(t)$. Therefore,
$\|\omega(t)-\omega_{\rm nom}\mathbf{1}_n\| \leq
\kappa e^{-\lambda t}\|\tilde{x}(0)\|
+ \gamma
\sup_{0\leq s\leq t}\|\Pi d(s)\| +
\sqrt{n} e^{-t/\tau_P}\|e_\xi(0)\| + \sqrt{n} \tau_P
\sup_{0\leq s\leq t}\|d_\xi(s)\|$.
Equivalently, if $ \sup_{t\geq 0}\|\Pi d(t)\|\leq \Delta_d,
\sup_{t\geq 0}|d_\xi(t)| \leq \Delta_\xi$,
then $\|\omega(t)-\omega_{\rm nom}\mathbf{1}_n\|
\leq
\kappa e^{-\lambda t}\|\tilde{x}(0)\| 
+
\sqrt{n} e^{-t/\tau_P}|e_\xi(0)| +
\gamma\Delta_d
+
\sqrt{n}\tau_P\Delta_\xi$. Thus, the frequency deviation from nominal synchronization is input-to-state practically stable~\cite{jiang1994small} with respect to the projected input $\Pi d$ and the weighted average-frequency mismatch $d_\xi$.
\end{theorem}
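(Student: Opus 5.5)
The plan is to split the frequency vector into its component along $\mathbf{1}_n$, weighted by $\xi$, and its disagreement component. The disagreement part is handled by Theorem~\ref{thm:isps_projected_frequency}, and the average part becomes a scalar stable linear ODE. Throughout I take identical filter constants $\tau_{P_i}=\tau_P$, as was done to obtain~\eqref{eq:centered_freq_delta}.

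\textbf{Step 1: the average-frequency dynamics.} Left-multiply the $\omega$-equation of~\eqref{eq:phase_omega_closedloop_vector} by $\xi^\top$. The coupling term drops because $\xi^\top\mathbf{G}_\omega=0$. With $\boldsymbol{\tau}_P=-(1/\tau_P)\mathbb{I}_n$, the damping term becomes $-\frac{1}{\tau_P}\xi^\top\omega$. The remaining input terms combine into $\xi^\top d(t)$, where $d$ is as in~\eqref{eq:projected_compact}. This gives
\begin{equation*}
\tfrac{d}{dt}\,\xi^\top\omega=-\tfrac{1}{\tau_P}\xi^\top\omega+\xi^\top d(t).
\end{equation*}
Subtracting the constant $\omega_{\rm nom}$ yields $\dot e_\xi=-\frac{1}{\tau_P}e_\xi+\xi^\top d-\frac{1}{\tau_P}\omega_{\rm nom}=-\frac{1}{\tau_P}e_\xi+d_\xi$, as claimed. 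Variation of constants then gives
\begin{equation*}
e_\xi(t)=e^{-t/\tau_P}e_\xi(0)+\int_0^t e^{-(t-s)/\tau_P}d_\xi(s)\,ds .
\end{equation*}
Bounding $|d_\xi(s)|$ by its supremum on $[0,t]$ and using $\int_0^t e^{-(t-s)/\tau_P}ds\le\tau_P$ gives the scalar estimate.

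\textbf{Step 2: the decomposition.} Since $\xi^\top\mathbf{1}_n=1$, we have $\omega=\Pi\omega+\mathbf{1}_n\xi^\top\omega=\tilde\omega+\mathbf{1}_n\xi^\top\omega$. Subtracting $\omega_{\rm nom}\mathbf{1}_n$ gives $\omega-\omega_{\rm nom}\mathbf{1}_n=\tilde\omega+\mathbf{1}_n e_\xi$. Moreover $\tilde x$ lies in $\mathcal{S}$, because $\xi^\top\Pi=\xi^\top-\xi^\top=0$. Therefore Theorem~\ref{thm:isps_projected_frequency} applies to $\tilde x$, and $\|\tilde\omega(t)\|\le\|\tilde x(t)\|\le\kappa e^{-\lambda t}\|\tilde x(0)\|+\gamma\sup_{s\le t}\|\Pi d(s)\|$.

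\textbf{Step 3: combine.} Apply the triangle inequality with $\|\mathbf{1}_n e_\xi\|=\sqrt n|e_\xi|$, and then insert the bounds from Steps 1 and 2. This gives the stated estimate. The uniform-bound version follows by replacing the suprema with $\Delta_d$ and $\Delta_\xi$. The ISpS conclusion comes from reading the estimate as a $\mathcal{KL}$ decay term plus gains that are linear in the inputs $\Pi d$ and $d_\xi$.

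\textbf{Main obstacle.} The delicate point is Step 1, namely that the average mode decouples exactly. This needs $\boldsymbol{\tau}_P$ to be a scalar multiple of the identity. Otherwise $\xi^\top\boldsymbol{\tau}_P\omega$ also involves $\tilde\omega$, and a cascade argument would be required. In that case I would treat $\xi^\top\boldsymbol{\tau}_P\tilde\omega$ as an extra input to the stable scalar system, bounded through Step 2. A second point to check is that Theorem~\ref{thm:isps_projected_frequency} is applicable at all, which rests on the projection identity $\xi^\top\Pi=0$ used in Step 2.
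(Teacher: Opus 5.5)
Your proposal is correct and follows essentially the same route as the paper. You derive the scalar equation for $\xi^\top\omega$ from $\xi^\top\mathbf{G}_\omega=0$ and $\boldsymbol{\tau}_P=-(1/\tau_P)\mathbb{I}_n$, bound it by variation of constants, and combine it with the ISS estimate of Theorem~\ref{thm:isps_projected_frequency} through $\omega=\Pi\omega+\mathbf{1}_n\xi^\top\omega$ and the triangle inequality. The only cosmetic difference is that you obtain $\tilde x(t)\in\mathcal S$ for every $t$ directly from $\xi^\top\Pi=0$, whereas the paper checks it at $t=0$ and then invokes invariance of $\mathcal S$.
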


\begin{proof}
We prove the result in three steps. First, we show that the projected state remains in the disagreement subspace. Second, we apply the ISS estimate for the projected dynamics on this subspace from Theorem~\ref{thm:isps_projected_frequency}. Third, we combine this estimate with the scalar weighted-average frequency dynamics.

For any initial condition $(\delta(0),\omega(0))$, projected variables $(\tilde{\delta}(0), \tilde{\omega}(0))$ satisfy $\xi^\top\tilde{\delta}(0)
= \xi^\top\Pi\delta(0)
=0, \xi^\top\tilde{\omega}(0)=
\xi^\top\Pi\omega(0)
=0.$
Thus,
$\tilde{x}(0)=
\begin{bmatrix}
\tilde{\delta}(0) \ \
\tilde{\omega}(0)
\end{bmatrix}
\in\mathcal S.$

As shown in Theorem~\ref{thm:isps_projected_frequency}, $\mathcal{S}$ is invariant under the projected dynamics~\eqref{eq:projected_compact}.
Therefore, if $\tilde{x}(0)\in\mathcal S$, then $\tilde{x}(t)\in\mathcal S$ for all $t\geq 0$. Hence, the projected trajectory evolves entirely on the disagreement subspace, where the restriction of $\mathbf A_\Pi$ in~\eqref{eq:projected_compact} is Hurwitz by Lemma~\ref{lem:A_pi_hurwitz}.

Consequently, the ISS estimate on $\mathcal{S}$ applies to the projected system. Thus, from Theorem~\ref{thm:isps_projected_frequency}, there exist constants $\kappa\geq 1$, $\lambda>0$, and $\gamma>0$ such that
\[
\|\tilde{x}(t)\|
\leq
\kappa e^{-\lambda t}\|\tilde{x}(0)\|
+
\gamma\sup_{0\leq s\leq t}\|\Pi d(s)\|,
\quad t\geq 0,
\]
where, $\Pi d$ is the projected input. Since $\tilde{\omega}$ is a component of $\tilde{x}$, it follows that
\[
\|\tilde{\omega}(t)\|
\leq
\|\tilde{x}(t)\|
\leq
\kappa e^{-\lambda t}\|\tilde{x}(0)\|
+
\gamma\sup_{0\leq s\leq t}\|\Pi d(s)\|.
\]

It remains to analyze the weighted-average frequency component. Define $\omega_\xi(t):=\xi^\top\omega(t),
\
e_\xi(t):=\omega_\xi(t)-\omega_{\rm nom}$. Using the original frequency dynamics
$\dot{\omega} = \mathbf{G}_\omega\delta
-\frac{1}{\tau_P}\omega
+ d(t)$, where, $d(t)$ is as in \eqref{eq:projected_compact}. Multiplying by $\xi^\top$, we obtain
\begin{align*}
\dot{\omega}_\xi
& \textstyle = \xi^\top\dot{\omega} =
\xi^\top\mathbf{G}_\omega\delta
-\frac{1}{\tau_P}\xi^\top\omega
+
\xi^\top d(t).
\end{align*}
Since $\xi^\top\mathbf{G}_\omega=0$, this reduces to
$ \dot{\omega}_\xi = 
-\frac{1}{\tau_P}\omega_\xi
+
\xi^\top d(t).$ 
Subtracting $\omega_{\rm nom}$ from both sides gives
\[
\dot e_\xi = \textstyle 
-\frac{1}{\tau_P}e_\xi
+
\left(
\xi^\top d(t)-\frac{1}{\tau_P}\omega_{\rm nom}
\right).
\]
Define, $ d_\xi(t):=
\xi^\top d(t)-\frac{1}{\tau_P}\omega_{\rm nom}$. Then $\dot e_\xi = -\frac{1}{\tau_P}e_\xi+d_\xi(t)$.
By the variation-of-constants formula,
\[ \textstyle 
e_\xi(t) = 
e^{-t/\tau_P}e_\xi(0)
+
\int_0^t e^{-(t-s)/\tau_P}d_\xi(s)ds.
\]
Taking absolute values yields
\begin{align*}
|e_\xi(t)|
&\leq \textstyle 
e^{-t/\tau_P}|e_\xi(0)|
+
\int_0^t e^{-(t-s)/\tau_P}|d_\xi(s)|ds \\
& \textstyle \leq
e^{-t/\tau_P}|e_\xi(0)|
+
\sup_{0\leq s\leq t}|d_\xi(s)|
\int_0^t e^{-(t-s)/\tau_P}ds \\
& \textstyle \leq
e^{-t/\tau_P}|e_\xi(0)|
+
\tau_P\sup_{0\leq s\leq t}|d_\xi(s)|.
\end{align*}
Finally, decompose the original frequency vector as
$ \omega
= \Pi\omega+\mathbf{1}_n\xi^\top\omega = 
\tilde{\omega}+\mathbf{1}_n\omega_\xi.$
Therefore,
\[
\omega-\omega_{\rm nom}\mathbf{1}_n = 
\tilde{\omega}
+
\mathbf{1}_n(\omega_\xi-\omega_{\rm nom}) = \tilde{\omega}
+
\mathbf{1}_n e_\xi.
\]
Using the triangle inequality,
\begin{align*}
\|\omega(t)-\omega_{\rm nom}\mathbf{1}_n\|
&\leq
\|\tilde{\omega}(t)\|
+
\|\mathbf{1}_n e_\xi(t)\| \\
& =
\|\tilde{\omega}(t)\|
+
\sqrt{n}\|e_\xi(t)\|.
\end{align*}
Substituting the bounds on $\tilde{\omega}(t)$ and $e_\xi(t)$ gives
\begin{align*}
\|\omega(t)-\omega_{\rm nom}\mathbf{1}_n\|
&\leq
\kappa e^{-\lambda t}\|\tilde{x}(0)\|
+
\gamma
\sup_{0\leq s\leq t}\|\Pi d(s)\|\\
&
\hspace{-0.2in} +
\sqrt{n}e^{-t/\tau_P}|e_\xi(0)|
+
\sqrt{n}\tau_P
\sup_{0\leq s\leq t}|d_\xi(s)|.
\end{align*}
This proves the stated bound on the frequency. If, in addition,
$ \sup_{t\geq 0}\|\Pi d(t)\|\leq \Delta_d, \sup_{t\geq 0}|d_\xi(t)|\leq \Delta_\xi$,
then the preceding estimate immediately implies
\begin{align*}
\|\omega(t)-\omega_{\rm nom}\mathbf{1}_n\|
&\leq
\kappa e^{-\lambda t}\|\tilde{x}(0)\|
+
\sqrt{n}e^{-t/\tau_P}|e_\xi(0)| \\
& \hspace{0.2in} +
\gamma\Delta_d
+
\sqrt{n}\tau_P\Delta_\xi.
\end{align*}
Thus, the frequency deviation from nominal synchronization is input-to-state practically stable with respect to the disagreement input $\Pi d$ and the weighted average-frequency mismatch $d_\xi$.
\end{proof}

\section{Conclusion}
This paper presented a mathematical stability analysis of a sampled-data optimization-based secondary controller for networks of inverter-interfaced DERs. The analysis treated the controller model as given and studied the nonlinear closed-loop dynamics induced by sampled measurements, constrained optimization updates, and interpolation-based actuation. For the GFM-DER voltage loop, we established large-signal boundedness of the voltage and filtered reactive-power dynamics within a certified operating region. We also characterized positive steady-state operating points and showed how the optimization-induced consensus condition connects voltage regulation with equal per-unitized reactive power sharing. For the phase-frequency dynamics, we used a projected representation to remove the marginal absolute-angle mode and established input-to-state stability with respect to active-power mismatch. These results provide closed-loop guarantees for optimization-based secondary control beyond small-signal or purely continuous-time analyses. Future work will focus on relaxing some of the technical assumptions used in the analysis, including identical active-power filter time constants and exact tracking by the inner GFL-DER control loop. 
\appendix
\subsection{Supporting Lemmas}
\begin{lemma}\label{lem:psd_proof}
 Let $\mathcal{B} \in \mathbb{R}^{n\times n}$ be a square matrix with $[\mathcal{B}]_{ii} := -B_{i i} = -(B^{\mathrm{sh}}_{i} + \sum_{\kk \in N_i} B_{ik}), [\mathcal{B}]_{i\kk} := B_{i\kk}, \kk \neq i$, then the matrix $\Sigma \mathcal{B} + \mathcal{B}^\top \Sigma$ with $\Sigma := \text{diag}((\tilde{\beta}_i r_{V_i}V_i)/2\tau^2_{Q_i})$ is positive semi-definite.
\end{lemma}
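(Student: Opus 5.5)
The plan is to prove positive semi-definiteness entrywise, by symmetric diagonal dominance and Gershgorin's theorem. Write $\sigma_i := \tilde\beta_i r_{V_i}V_i/(2\tau_{Q_i}^2)$ and $b_{ii}:=[\mathcal B]_{ii}=-B_{ii}$. The key steps are the following.

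\textbf{Step 1: structure of $\mathcal B$.} Because $B_{ik}=B_{ki}<0$, the matrix $\mathcal B$ is symmetric with non-positive off-diagonal entries. Its diagonal is
\[
b_{ii}=|B^{\mathrm{sh}}_i|+\textstyle\sum_{k\in N_i}|B_{ik}|,
\]
so every row is diagonally dominant with margin $|B^{\mathrm{sh}}_i|>0$.

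\textbf{Step 2: sign of $\Sigma$.} Under the standing hypotheses $V_i>0$, $r_{V_i}>0$ and $\tilde\beta_{V_i}>0$ from Theorem~\ref{thm:centered_invariance}, I need $\tilde\beta_{Q_i}\ge 0$ so that $\sigma_i\ge0$. I will state this as an explicit assumption; it holds for both design choices used in Theorem~\ref{thm:steady_state}.

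\textbf{Step 3: entries of the symmetric matrix.} The matrix $M:=\Sigma\mathcal B+\mathcal B^\top\Sigma$ is symmetric with
\[
M_{ii}=2\sigma_i b_{ii},\qquad M_{ik}=(\sigma_i+\sigma_k)B_{ik}\le 0 .
\]
By Gershgorin, $M\succeq0$ provided that for every $i$
\[
2\sigma_i b_{ii}\;\ge\;\textstyle\sum_{k\in N_i}(\sigma_i+\sigma_k)|B_{ik}| .
\]
Substituting $b_{ii}$, this row condition becomes
\[
2\sigma_i|B^{\mathrm{sh}}_i|\ge\textstyle\sum_{k\in N_i}(\sigma_k-\sigma_i)|B_{ik}| .
\]
It holds automatically when all $\sigma_i$ are equal. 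More generally it holds whenever the spread of the weights $\sigma_i$ is controlled by the shunt margin. On the invariant set of Theorem~\ref{thm:centered_invariance} the voltages satisfy $V_i\in[\Vnom-R_V,\Vnom+R_V]$, which bounds the ratio $\sigma_k/\sigma_i$ in terms of $R_V$ and the ratios of $\tilde\beta_i r_{V_i}/\tau_{Q_i}^2$. This turns the row condition into a checkable inequality on the system parameters.

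\textbf{Step 4 (alternative route, for the equal-weight case).} One can instead expand the quadratic form
\[
x^\top Mx=2\sum_i\sigma_i|B^{\mathrm{sh}}_i|x_i^2+\sum_{i<k}|B_{ik}|\bigl[2\sigma_ib\ldots\bigr],
\]
completing squares edge by edge, i.e.\ using $|B_{ik}|(\sigma_i+\sigma_k)(x_i-x_k)^2\ge0$-type terms. This gives the same sufficient condition, and reduces to a Laplacian-plus-nonnegative-diagonal form when the $\sigma_i$ coincide.

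\textbf{Main obstacle.} The lemma is false without a condition relating the $\sigma_i$. For $n=2$ with $\sigma_1=0$, $\sigma_2=1$, $B_{12}=-1$ and small shunt, $M=\begin{bmatrix}0&-1\\-1&2b_{22}\end{bmatrix}$ is indefinite. I would therefore first try to show that the weights are uniform: identical $\tilde\beta_i r_{V_i}/\tau_{Q_i}^2$, together with near-equal voltages on the invariant set, suffices. Failing that, I would add the Gershgorin row condition of Step~3 as a hypothesis on the network and controller data, since the domination argument needs it.
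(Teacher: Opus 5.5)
\textbf{Verdict: the statement is false, your proposal is sound, and the gap is in the paper's proof.} You are right that the lemma fails without a condition tying the weights $\sigma_i$ to the shunt margins.

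The paper substitutes $y=\Sigma^{1/2}x$ and claims $x^\top\Sigma\mathcal{B}x=y^\top\Sigma^{-1/2}\mathcal{B}\Sigma^{-1/2}y$. It then argues by congruence with $\mathcal{B}$ and applies Gershgorin to $\mathcal{B}$. But with $x=\Sigma^{-1/2}y$ one actually gets $x^\top\Sigma\mathcal{B}x=y^\top\Sigma^{1/2}\mathcal{B}\Sigma^{-1/2}y$. This is a \emph{similarity} transform of $\mathcal{B}$: it keeps the positive eigenvalues of $\mathcal{B}$ but says nothing about the sign of its symmetric part. So positive semi-definiteness does not transfer.

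Your route applies Gershgorin directly to $M=\Sigma\mathcal{B}+\mathcal{B}^\top\Sigma$, with $M_{ii}=2\sigma_i b_{ii}$ and $M_{ik}=(\sigma_i+\sigma_k)B_{ik}$. That is the right object, and your row condition $2\sigma_i|B^{\mathrm{sh}}_i|\ge\sum_{k\in N_i}(\sigma_k-\sigma_i)|B_{ik}|$ is a correct sufficient condition. You are also right that the sign of $\sigma_i$ needs an assumption on $\tilde\beta_{Q_i}$, which the paper silently takes from $V_i>0$ alone.

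\textbf{Points to tighten.}

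\emph{Counterexample.} Yours uses $\sigma_1=0$, which lies outside the strictly positive regime the lemma intends. A cleaner one is $\mathcal{B}=\begin{bmatrix}1+s&-1\\-1&1+s\end{bmatrix}$ with $\Sigma=\mathrm{diag}(1,t)$. Then $\det M=4t(1+s)^2-(1+t)^2$, which is negative for, e.g., $s=0.1$, $t=10$. These weights are realizable through $r_{V_2}=t\,r_{V_1}$ with all other data equal.

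\emph{The failure is generic.} If all shunts vanish on a connected network, $\mathcal{B}$ is a Laplacian, so $\mathbf{1}_n^\top M\mathbf{1}_n=2\,\mathbf{1}_n^\top\Sigma\mathcal{B}\mathbf{1}_n=0$. A positive semi-definite $M$ would then need $M\mathbf{1}_n=\mathcal{B}\Sigma\mathbf{1}_n=0$, which fails for non-uniform weights. Hence $M$ is indefinite, and by continuity it stays so for small $|B^{\mathrm{sh}}_i|$. Consequently, for weakly shunted networks your row condition is a genuine restriction on gain ratios and on $R_V$ relative to $\Vnom$. Near-equal voltages do not give it for free.

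\emph{Step 4.} Your display is truncated. The identity you want is
\[
x^\top Mx=\sum_{i} x_i^2\Bigl[2\sigma_i|B^{\mathrm{sh}}_i|-\sum_{k\in N_i}(\sigma_k-\sigma_i)|B_{ik}|\Bigr]+\sum_{i<k}|B_{ik}|(\sigma_i+\sigma_k)(x_i-x_k)^2 .
\]
This confirms that the edgewise route reproduces exactly the Gershgorin row condition.

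\emph{A possibly cleaner repair.} Theorem~\ref{thm:centered_invariance} only uses the form at $x=V$, where $V^\top MV=\sum_i\tilde\beta_i r_{V_i}V_iQ_i(V)/\tau_{Q_i}^2$. This is nonnegative whenever every $Q_i(V)\ge0$, a condition that does not involve the controller gains. It holds near $\mathbf{\Vnom}$, since $Q_i(\mathbf{\Vnom})=\Vnom^2|B^{\mathrm{sh}}_i|>0$. It may therefore be a cleaner fix at the point of use than positive semi-definiteness of $M$.
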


\begin{proof} For any $x \in \mathbb{R}^{n}$ we have
$x^\top (\Sigma \mathcal{B} + \mathcal{B}^\top \Sigma)x = 2x^\top \Sigma \mathcal{B} x$,
since $x^\top \Sigma \mathcal{B} x$ is a scalar and equal to its transpose. Since, $V_i > 0$, for all $i$, thus, $[\Sigma]_{i i} > 0$ for all $i$. Let $y = \Sigma^{1/2} x$ (well-defined since $\Sigma$ is a positive diagonal matrix). Then $x = \Sigma^{-1/2} y$ and $
2 x^\top \Sigma \mathcal{B} x 
= 2 y^\top ( \Sigma^{-1/2} \mathcal{B} \Sigma^{-1/2} ) y.$
Define $\nu := \Sigma^{-1/2} \mathcal{B} \Sigma^{-1/2}$. Note that $\nu = \sigma^\top \mathcal{B} \sigma$ with $\sigma = \Sigma^{-1/2}$, so $\nu$ is congruent to $\mathcal{B}$, and $\nu \succeq 0$ if and only if $\mathcal{B} \succeq 0$. Therefore, $\Sigma \mathcal{B} + \mathcal{B}^\top \Sigma \succeq 0 \iff \mathcal{B} \succeq 0$, which is indeed the case due to the Gershgorin Circle Theorem.
\end{proof}

\begin{lemma}\label{lem:continuous_sol}
Solution $x_\s$ of~\eqref{eq:gfm_optimization} is continuous in $(V,Q(V))$.  
\end{lemma}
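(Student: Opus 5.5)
The plan is to reduce problem~\eqref{eq:gfm_optimization} to a scalar projection problem with an explicit closed form, and then read continuity off that formula. Because of the consensus constraints, every feasible point is $x_\s = c\,\mathbf 1_n$, and
\[
\textstyle \sum_i \tfrac12 (c-\alpha_i)^2 = \tfrac n2 (c-\bar\alpha)^2 + \text{const},
\]
with $\bar\alpha := \frac1n\sum_i \alpha_i$. Each set $D_i^{V_\Delta}$ is the closed interval $[m_i - V_\Delta,\, m_i + V_\Delta]$, where $m_i := (1+\beta_{Q_i}) r_{V_i}\overline{Q}_i - \beta_{V_i}(\Vnom - V_i)$ as in Step~2 of the proof of Theorem~\ref{thm:centered_invariance}. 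The feasible set for $c$ is therefore
\[
[\ell,\, h] := \bigl[\max_i m_i - V_\Delta,\; \min_i m_i + V_\Delta\bigr].
\]
Minimizing a strictly convex quadratic in $c$ over an interval gives the unique optimizer
\[
c^\star = \min\{\max\{\bar\alpha,\ell\},h\},
\]
the projection of $\bar\alpha$ onto $[\ell,h]$.

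Continuity then follows from three observations. First, $\alpha_i$ and $m_i$ are affine in $(V_i,\overline{Q}_i)$, so they are continuous in $(V,\overline{Q})$. Second, $\bar\alpha$, $\ell = \max_i m_i - V_\Delta$ and $h = \min_i m_i + V_\Delta$ are continuous, since finite max/min of continuous functions are continuous. Third, the clamp $(a,\ell,h)\mapsto \min\{\max\{a,\ell\},h\}$ is a composition of continuous (indeed Lipschitz) operations. Together these give continuity of $x_\s = c^\star \mathbf 1_n$ in $(V,Q(V))$. Since $Q$ is continuous in $V$, continuity in $V$ alone also follows.

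The main obstacle is nonemptiness of the feasible set, $\ell \le h$, which is equivalent to $\max_i m_i - \min_i m_i \le 2V_\Delta$; the clamp formula above is only valid under it. This does not hold for arbitrary data, so I would state the lemma on the region where the problem is feasible, i.e. where the $m_i$ spread at most $2V_\Delta$ (implicitly assumed wherever $x_\s$ is used in Theorem~\ref{thm:steady_state_existence}). The feasibility region is closed by continuity of the $m_i$.

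If one insists on a definition covering the boundary or infeasible data, there is a fallback. Either define $x_\s$ as the projection onto the nearest feasible consensus value, which is still continuous, or invoke Berge's maximum theorem. Berge applies on the interior where $\ell<h$, because the constraint correspondence $c\mapsto[\ell,h]$ is continuous with compact convex values and the objective is strictly convex, so the argmin is single-valued and continuous. The explicit clamp formula is simpler and also yields a Lipschitz modulus, so I would use it as the main argument.
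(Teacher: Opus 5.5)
Your proposal is correct and follows essentially the same route as the paper. You reduce to a scalar problem via the consensus constraints, identify the optimizer as the clamp of $\bar\alpha$ onto $[\max_i m_i - V_\Delta,\, \min_i m_i + V_\Delta]$, and read off continuity from the continuity of $\bar\alpha$, the interval endpoints, and the clamp. The paper likewise states the conclusion only on the region where that interval is nonempty, which is the same caveat you flag.
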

\begin{proof}
Due to the consensus constraints, every feasible point has the form
$x_s=c\mathbf 1_{n}$. Hence, the optimization reduces to the
scalar problem $ \min_{c\in\mathcal I(V,Q)}
\|c\mathbf 1_{n}-\alpha(V,Q)\|_2^2$, 
where $\mathcal I(V,Q) := \textstyle 
\bigcap_{i=1}^{n}
[m_i(V,Q)-V_\Delta, m_i(V,Q)+V_\Delta]$ and $ m_i(V,Q)
:= (1+\beta_{Q_i})r_{V_i}Q_i^{\mathrm{avg}}(t_\s)
- \beta_{V_i}(\Vnom - V_i (t_\s)).$
Writing $\bar\alpha(V,Q):=
\frac{1}{n}\mathbf 1^\top\alpha(V,Q),$
we have $\|c\mathbf 1-\alpha(V,Q)\|_2^2
=
n(c-\bar\alpha(V,Q))^2
+
\|\alpha(V,Q)-\bar\alpha(V,Q)\mathbf 1\|_2^2.$ Therefore, the optimizer is $c_\s(V,Q)=\Pi_{\mathcal I(V,Q)}(\bar\alpha(V,Q))$.
Let $\ell(V,Q):=\max_i\{m_i(V,Q)-V_\Delta\},
r(V,Q):=\min_i\{m_i(V,Q)+V_\Delta\}.$
Then $\mathcal I(V,Q)=[\ell(V,Q),r(V,Q)]$. Since each $m_i(V,Q)$ is
continuous, $\ell$ and $r$ are continuous. Moreover, $\bar\alpha$ is
continuous because $\alpha$ is continuous. Thus
$c_\s(V,Q)
=
\min\{\max\{\bar\alpha(V,Q),\ell(V,Q)\},r(V,Q)\}$ 
is continuous. Hence, the solution
$ x_\s(V,Q)=c_\s(V,Q)\mathbf 1_{n}$ 
is continuous on the domain where $\mathcal I(V,Q)$ is nonempty.
\end{proof}

\begin{lemma}\label{lem:A_pi_hurwitz}
Suppose that the following conditions hold: i) $\tau_{P_i}=\tau_P>0 , r_{\omega_i}>0, V_i>0$ for all $i\in \{1,\dots,n\}$, ii) the GFM-DER interaction graph is connected. Then the restriction of $\mathbf{A}_\Pi$ to the disagreement subspace $ \mathcal{S}
:=
\left\{
\tilde{x}=
\begin{bmatrix}
\tilde{\delta}\\
\tilde{\omega}
\end{bmatrix}
\in\mathbb{R}^{2n}
|
\xi^\top\tilde{\delta}=0,
\xi^\top\tilde{\omega}=0
\right\}$, where $\xi$ is a normalized left null vector of $\mathbf{G}_\omega$, is Hurwitz.
\end{lemma}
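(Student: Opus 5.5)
The plan is to reduce the question to the spectrum of the weighted Laplacian $L:=-\mathbf{G}_\omega$ and then use the second-order structure of $\mathbf{A}_\Pi$.

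\textbf{Step 1: $L$ is a Laplacian-like matrix with a simple zero eigenvalue.} By definition, $[\mathbf{G}_\omega]_{ij} = -r_{\omega}B_{ij}V_iV_j/\tau_P$ for $j\neq i$. Since $B_{ij}<0$, $V_i>0$ and $r_{\omega_i}>0$, these entries are nonnegative, and they are positive exactly on the edges of the interaction graph. The diagonal is minus the row sum, so $L\mathbf 1_n=0$. Moreover $L = \mathrm{diag}(r_{\omega_i}/\tau_P)\,L_s$, where $L_s$ is the symmetric weighted Laplacian with edge weights $|B_{ij}|V_iV_j$. Write $R:=\mathrm{diag}(r_{\omega_i}/\tau_P)\succ0$. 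Then $L$ is similar to $R^{1/2}L_sR^{1/2}$, which is symmetric positive semidefinite. Because the graph is connected, $\ker L_s=\mathrm{span}(\mathbf 1_n)$. Hence $L$ has real eigenvalues $0=\mu_1<\mu_2\le\dots\le\mu_n$, with left null vector $\xi\propto R^{-1}\mathbf 1_n$, which is positive, so $\xi^\top\mathbf 1_n=1$ can be imposed. In particular $L$ is diagonalizable with real nonnegative spectrum.

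\textbf{Step 2: spectrum of $\mathbf{A}_\Pi$.} If $Lw=\mu w$, then $\mathbf{A}_\Pi[w;\lambda w]=\lambda[w;\lambda w]$ exactly when $\lambda^2+\lambda/\tau_P+\mu=0$. Since $L$ is diagonalizable, this accounts for all $2n$ eigenvalues, counted with multiplicity, for each $\mu_k$. For $\mu_k>0$ both roots have real part $-\tfrac{1}{2\tau_P}<0$ in the complex case, or are both negative in the real case, because their sum is $-1/\tau_P$ and their product is $\mu_k>0$. For $\mu_1=0$ the roots are $0$ and $-1/\tau_P$. The double root $\mu_k=1/(4\tau_P^2)$ gives a Jordan block but still has negative real part, so it causes no difficulty.

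\textbf{Step 3: restrict to $\mathcal S$.} The paper's argument already shows that $\mathcal S$ is invariant under $\mathbf{A}_\Pi$, since $\xi^\top\mathbf{G}_\omega=0$. The complement of $\mathcal S$ is spanned by the eigenvectors associated with $\mu_1=0$, namely $[\mathbf 1_n;0]$ (eigenvalue $0$) and $[\mathbf 1_n;-\tfrac{1}{\tau_P}\mathbf 1_n]$ (eigenvalue $-1/\tau_P$). Neither lies in $\mathcal S$, because $\xi^\top\mathbf 1_n=1$. The eigenvectors $[w_k;\lambda w_k]$ for $k\ge2$ do lie in $\mathcal S$: $w_k$ is a right eigenvector with $\mu_k\neq0$, so $\xi^\top w_k=\mu_k^{-1}\xi^\top Lw_k=0$. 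Thus $\mathbb R^{2n}=\mathcal S\oplus\mathrm{span}\{[\mathbf 1;0],[\mathbf 1;-\mathbf 1/\tau_P]\}$, and the spectrum of $\mathbf{A}_\Pi|_{\mathcal S}$ consists exactly of the roots with $k\ge2$. All of these lie in the open left half-plane, so the restriction is Hurwitz.

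\textbf{Main obstacle.} The main difficulty is the bookkeeping that the restricted spectrum excludes exactly the two $\mu_1$-modes. This requires dimension counting: $\dim\mathcal S=2n-2$, and the generalized eigenspaces for $k\ge2$ have total dimension $2n-2$ and lie in $\mathcal S$. For the Jordan case, the argument must also check that the generalized eigenvector $[0;w_k]$ satisfies $\xi^\top w_k=0$, so it too lies in $\mathcal S$. An alternative, if needed, is a Lyapunov function $\tilde\omega^\top R^{-1}\tilde\omega+\tilde\delta^\top L_s\tilde\delta$ together with LaSalle on $\mathcal S$.
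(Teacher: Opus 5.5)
Your proof is correct and follows essentially the same route as the paper: you factor $\mathbf{G}_\omega=-\tfrac{1}{\tau_P}R_\omega L_V$, use similarity to a symmetric positive semidefinite matrix together with connectivity to get a simple zero eigenvalue and strictly negative remaining eigenvalues, and lift each mode to the quadratic $s^2+s/\tau_P-\lambda_k=0$. Your Step~3 is an improvement on the paper. The invariant splitting $\mathbb{R}^{2n}=\mathcal S\oplus\mathrm{span}\{[\mathbf 1_n;0],[\mathbf 1_n;-\tfrac{1}{\tau_P}\mathbf 1_n]\}$ with its dimension count proves the exclusion of the zero mode, which the paper only asserts as being ``removed by projection onto $\mathcal S$''.
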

\begin{proof}
Define the positive diagonal matrix $R_\omega
:=
\operatorname{diag}(r_{\omega_1},\dots,r_{\omega_{n}})$. Next, define the symmetric weighted Laplacian $L_V$ associated with the GFM-DER interaction graph by $[L_V]_{i\jj}
= B_{i\jj}V_i V_\jj,
\ i\neq \jj, \ \mbox{and} \
[L_V]_{ii}
= -\sum_{\jj\neq i}B_{i\jj}V_i V_\jj.$
Since $B_{ij}=B_{ji}<0$ on every edge and $V_i>0$, $L_V$ is a symmetric positive semidefinite weighted Laplacian. Since the GFM-DER interaction graph is connected, $
\operatorname{ker}(L_V)=\operatorname{span}\{\mathbf{1}_{n}\}.$ Using the definition of $\mathbf{G}_\omega$, we can write $\mathbf{G}_\omega = -\frac{1}{\tau_P}R_\omega L_V$. Because $R_\omega\succ 0$, the matrix $R_\omega L_V$ is similar to a symmetric positive semidefinite matrix. Indeed,
$R_\omega L_V
=
R_\omega^{1/2}
\left(
R_\omega^{1/2}L_VR_\omega^{1/2}
\right)
R_\omega^{-1/2}$. Thus, $R_\omega L_V$ has real nonnegative eigenvalues. Consequently, $\mathbf{G}_\omega$ has real nonpositive eigenvalues. Since the graph is connected, $\mathbf{G}_\omega$ has one zero eigenvalue corresponding to the uniform-angle direction and all remaining eigenvalues are strictly negative. That is,
$ \lambda_1(\mathbf{G}_\omega)=0,
\ \lambda_k(\mathbf{G}_\omega)<0,
\ k=2,\dots,n.
$ The zero eigenvalue corresponds to the direction $\mathbf{1}_{n}$, which is removed by the projection. Hence, on the disagreement subspace $\mathcal{S}$, only the modes associated with $\lambda_k(\mathbf{G}_\omega)<0$ remain. 

To show that this implies the Hurwitz property of $\mathbf{A}_\Pi$ on the disagreement subspace, we now lift the modal properties of $\mathbf{G}_\omega$ to the second-order phase-frequency dynamics. Since $\mathbf{G}_\omega$ is similar to a symmetric matrix, it is diagonalizable and has real eigenvalues. Let $v_k$ be an eigenvector of $\mathbf{G}_\omega$ associated with a disagreement eigenvalue $\lambda_k<0$, so that
$ \mathbf{G}_\omega v_k=\lambda_k v_k.$ 
Consider an eigenpair 
$\left(s,
    \begin{bmatrix}
        \phi\\
        \psi
    \end{bmatrix}
    \right)$
of $\mathbf{A}_\Pi$ on the disagreement subspace. Then
$\begin{bmatrix}
        0_{n} & \mathbb{I}_{n}\\
        \mathbf{G}_\omega & -\frac{1}{\tau_P}\mathbb{I}_{n}
    \end{bmatrix}
    \begin{bmatrix}
        \phi\\
        \psi
    \end{bmatrix}
    =
    s
    \begin{bmatrix}
        \phi\\
        \psi
    \end{bmatrix}.
$    
The first block row gives $\psi=s\phi$. Substituting this relation into the second block row gives $    \mathbf{G}_\omega\phi-\frac{1}{\tau_P}\psi=s\psi.$ 
Using $\psi=s\phi$, we obtain $\mathbf{G}_\omega\phi-\frac{s}{\tau_P}\phi=s^2\phi$,
or equivalently, $\mathbf{G}_\omega\phi
    = \left(s^2+\frac{s}{\tau_P}\right)\phi$. 
Thus, for each eigenvalue $\lambda_k$ of $\mathbf{G}_\omega$, the corresponding eigenvalues $s$ of $\mathbf{A}_\Pi$ satisfy $    s^2+\frac{1}{\tau_P}s-\lambda_k=0.$
On the disagreement subspace, $\lambda_k<0$. Hence, $\frac{1}{\tau_P}>0,
    \ -\lambda_k>0.$
Therefore, by the second-order Routh-Hurwitz criterion, the polynomial $s^2+\frac{1}{\tau_P}s-\lambda_k$. 
has both roots in the open left-half complex plane. Consequently, every eigenvalue of $\mathbf{A}_\Pi$ associated with a disagreement mode has a strictly negative real part. The only eigenvalue of $\mathbf{G}_\omega$ that is not strictly negative is $\lambda_1=0$, which corresponds to the uniform-angle direction $\mathbf{1}_{n}$. For this mode, the characteristic equation becomes
$s^2+\frac{1}{\tau_P}s=0,$
whose roots are $s=0, s=-\frac{1}{\tau_P}.$
The zero root corresponds to the absolute-angle mode, which is removed by projection onto $\mathcal{S}$. Therefore, no marginal mode remains on $\mathcal{S}$, and all eigenvalues of $\mathbf{A}_\Pi$ restricted to $\mathcal{S}$ have strictly negative real parts. Hence, $\mathbf{A}_\Pi$ restricted to $\mathcal{S}$ is Hurwitz.
\end{proof}

\bibliography{references}

\end{document}